\newtheorem{lemma}{\hspace{0em}Lemma}
\newtheorem{proposition}{\hspace{0em}Proposition}
\newtheorem{remark}{\hspace{0em}Remark}
\begin{document}
%
\title{Multiobject fusion with minimum  information loss}
%
%
%

\author{Lin~Gao, 
	Giorgio Battistelli, 
	and~Luigi~Chisci
	\thanks{
		L. Gao, G. Battistelli and L. Chisci are with Dipartimento di Ingegneria dell'Informazione (DINFO),  Universit\`{a} degli Studi di Firenze, Italy.
		E-mail: \{lin.gao,giorgio.battistelli,luigi.chisci\}@unifi.it}}

%
%

\markboth{Journal of \LaTeX\ Class Files,~Vol.~14, No.~8, August~2015}%
{Shell \MakeLowercase{\textit{et al.}}: Bare Demo of IEEEtran.cls for IEEE Journals}
%



\maketitle

\begin{abstract}
	Generalized covariance intersection (GCI) has been effective in fusing multiobject densities from multiple agents for multitarget tracking and mapping purposes.
	From an information-theoretic viewpoint, it has been shown that GCI fusion essentially minimizes
	the \textit{weighted information gain} (WIG) from local densities to the fused one.
	In this paper,
	the interest is in the fusion rule that dually minimizes the \textit{information loss} (IL)
	and it turns out that such a fusion rule is consistent with the so-called \textit{linear opinion pool} (LOP).
	However, the LOP cannot be directly applied to multiobject fusion since the resulting \textit{fused multiobject density} (FMD), in general,  no longer belongs
	to the same family of the local ones,
	thus it cannot be utilized as prior information for the next recursion in the context of Bayesian multiobject filtering.
	In order to overcome such a difficulty,
	the principle of minimizing IL is further exploited in that the optimal FMD in the same family of the local ones is looked for.
	Implementation issues relative to the proposed \textit{minimum IL} (MWIL) fusion rule are discussed.
	Finally, the performance of the MWIL rule is assessed via simulation experiments concerning distributed multitarget tracking over a wireless sensor network.
\end{abstract}

\begin{IEEEkeywords}
Generalized covariance intersection, 
Kullback-Leibler divergence,
random finite set, 
data fusion,
linear opinion pool
\end{IEEEkeywords}

%
\IEEEpeerreviewmaketitle

\section{Introduction}
\IEEEPARstart{M}{ultiagent} systems (MAS) have widespread applications in
both civil and defense domains such as, for instance, 
\textit{wireless sensor networks} (WSN) in precision agriculture, smart cities, earth monitoring, health care \cite{cook2004smart,mukhopadhyay2010advances},
intelligent transportation \cite{ren2008distributed},
multiple-input-multiple-output (MIMO) and networked radar systems \cite{bar1995multitarget}.
One of the essential functions of MAS is to estimate the number and states of multiple objects of interest.
To this end, a
local posterior is first propagated by each agent via a \textit{multiobject filter} (MF) 
processing the available measurements,
then local posteriors are broadcast throughout the MAS
so that fusion is performed in order to aggregate them into a global posterior.

During the past decades, numerous MFs have been developed to perform local multiobject filtering.
They can be roughly classified into two main categories: 
traditional \cite{bar1975tracking,reid1979algorithm,fortmann1983sonar,musicki2004joint} and \textit{random finite set} (RFS) \cite{mahler2003multitarget,mahler2007phd,ristic2013tutorial,vo2009cardinality,vo2014labeled,reuter2014labeled} methods.
In traditional MF,
data association is tipically performed at first to provide measurement-to-object correspondence,
then a bank of independent Kalman filters, one for each object, 
are used to estimate the object states.
\textit{Track-to-track fusion} (T2TF) \cite{bar1981track} is adopted to associate object
tracks of different agents  so that the asoociated tracks can then be combined according to either optimal fusion \cite{chang1997optimal,li2003optimal,bar2008sequential}, 
if the cross-correlations among different agents are known, 
or otherwise covariance intersection \cite{tian2010no,ajgl2018covariance}.

Recently,
it has been recognized that the multiobject state can be more naturally regarded as an RFS.
In RFS-based MFs, 
the multiobject density (i.e. RFS density) \cite{ristic2013tutorial,vo2009cardinality,vo2014labeled,reuter2014labeled}
or \textit{probability hypothesis density} (PHD) \cite{mahler2003multitarget,mahler2007phd}
is propagated in time following the Bayesian rules,
and at each time instance the multiobject state is directly extracted from the posterior RFS density or PHD
by adopting either a \textit{maximum a posteriori} (MAP) or an \textit{expected a posteriori} (EAP) criterion \cite{mahler2007statistical}.
Since the RFS approach incorporates object appearance and disappearance
into the Bayesian recursion,
it turns out to be more elegant and effective than traditional methods. 
This paper focuses on fusion of RFS densities to be used in the context
of RFS-based MF algorithms.

So far, the most commonly adopted method for fusing multiple RFS densities is \textit{generalized covariance intersection} (GCI)  
(also known as \textit{exponential mixture density} in some papers) \cite{mahler2000optimal,clark2010robust},
where the fused RFS density turns out to be the weighted geometrical mean of the local densities.
Based on such a rule, exact formulae for
the fusion of Bernoulli \cite{Guldogan2014Consensus},
\textit{multiobject Poisson process} (MPP) \cite{uney2013distributed},
\textit{i.i.d. cluster process} (IIDCP) \cite{battistelli2013consensusA},
multiobject multi-Bernoulli process \cite{wang2017distributed},
labeled multi-Bernoulli process \cite{fantacci2018robust,li2018robust,li2018local}
and marginalized $\delta$-generalized labeled multi-Bernoulli process \cite{fantacci2015consensus} densities,
have been developed.
The origin of the GCI rule can trace back to the \textit{logarithmic opinion pool} (LogOP) \cite{genest1986combining},
which deals with aggregating information from multiple \textit{probability density functions} (PDFs).
Recently, in \cite{battistelli2014kullback} it has been shown that LogOP actually provides
the PDF 
with \textit{minimal weighted information gain} (MWIG).
The same idea has been extended to RFS densities for which it has been shown \cite{battistelli2015distributed}
that the GCI rule is immune to double counting of information.

Although the GCI rule has proved its effectiveness in fusing multiobject densities,
it turns out to be affected by cardinality inconsistency if no special care is taken \cite{uney2019fusion}.
In addition to LogOP, there is also another fusion rule for PDFs known as \textit{linear opinion pool} (LOP) \cite{genest1986combining}
according to which the fused PDF is the weighted arithmetic mean of the local PDFs.
Unfortunately,
LOP cannot be directly extended to fuse the majority of RFS densities
since, in general, the resulting weighted arithmetic average is not of the same type of the averaged densities
(e.g., the weighted arithmetic average of MPP/IIDCP densities is not MPP/IIDCP); hence the fused density
cannot be utilized as prior information for the next recursion of local MFs.
However, 
it turns out that
the PHD of the fused density equals the weighted sum of  the PHDs of the local ones,
which results into the so-called \textit{arithmetic fusion} (AF) \cite{bailey2012conservative,li2018partial} rule.
The AF rule has shown its benefits compared to the GCI rule in dealing with cardinality inconsistency \cite{li2019cardinality,li2018partial} and
missed detections \cite{yu2016distributed}.
In \cite{gostar2017cauchy}, it is shown that
the PHD fused via AF is the one minimizing the weighted sum of \textit{Cauchy-Schwarz divergences} (CSDs) \cite{hoang2015cauchy} to local densities,
but this result only holds whenever all involved local RFS densities are MPP,
i.e. completely charactered by their PHDs.

In this paper, the aim is to handle fusion of RFS densities
based on the idea of \textit{minimum discrimination information} (MDI),
which has also been exploited in \cite{battistelli2014kullback,battistelli2015distributed,gostar2017cauchy}.
Specifically,
the fused RFS density is defined as the one 
minimizing the weighted sum of Kullback-Leibler divergences from itself to local densities,
which actually amounts to interchanging local and fused RFS densities with respect to the discrimination function employed in \cite{battistelli2014kullback,battistelli2015distributed}.
Even though the two adopted discrimination functions are quite similar, they have explicitly distinct interpretations from an information-theoretic viewpoint:
the fused RFS density defined in  \cite{battistelli2014kullback,battistelli2015distributed}
is actually the one that provides \textit{minimum weighted information gain} (MWIG);
conversely, the fused RFS density of this paper is the one that
leads to \textit{minimum information loss} (MIL).
In \cite{abbas2009kullback} it has been shown  that the fused density with MIL turns out to be consistent with the LOP.
However, as already pointed out, the LOP is unsuitable for multiagent PHD and \textit{Cardinalized PHD} (CPHD) filtering due to lack of closure within the families of MPP and IIDCP densities.
In order to overcome such a difficulty,
the MIL fusion paradigm is further exploited by looking for the best, in the MIL sense, MPP and IIDCP densities.
Finally, some implementation issues concerning MIL fusion are discussed
and the performance of MIL fusion is examined via simulation experiments.

It is worth to point out that,
even though some of the results in this paper are already known in the literature (appropriate citations are provided), to the best of the authors' knowledge
it is first proposed here
to fuse RFS densities of MPP and IIDCP types without any approximation based on the MIL criterion.
Further, the following remarks are in order.
\begin{itemize}
	\item[-] The resulting MIL-based MPP fusion in this paper is the same as AF utilized in \cite{li2018partial,li2019cardinality,li2017distributed}.
	However, while such AF rule has been introduced with heuristic arguments, a nice information-theoretic interpretation
	of it in terms of minimum information loss is provided here.
	\item[-] The resulting MIL-based IIDCP fusion in this paper has actually been adopted  in \cite{yu2016distributed} as a heuristic way
	to solve the misdetection problem,
	without giving any theoretical justification.
\end{itemize}

The rest of the paper is organized as follows. 
Section II provides the necessary background. 
Section III presents the main results on data fusion with MIL and their application to RFS densities of several types.
Implementation issues are discussed in Section IV.
The effectiveness of MIL fusion is demonstrated by means of simulation examples in Section V. 
Section VI ends the paper with concluding remarks as well as perspectives for future work.

\section{Background}

\subsection{RFS densities for modeling multitarget state}  \label{sec:RFS}
In this paper, multiple objects inside the area of interest (AoI) are modeled as an RFS ${\cal X} \subset \mathbb{R}^n$,
which consists of $\left| {\cal X} \right|$ objects.
From a probabilistic viewpoint,
an RFS $\cal X$ is completely characterized by its multiobject density $f({\cal X})$.
In this paper, the focus is on three common RFS densities, i.e.
Bernoulli,
MPP and IIDCP densities 
whose definitions are given hereafter \cite{mahler2007statistical}.
\begin{enumerate}[1)]
	\item Bernoulli RFS: \\
	The density $f^B$ of a Bernoulli RFS $\cal X$ is given by
	\begin{align}
	f^B\left( {\cal X} \right) = \left\{ \begin{array}{l}
	1 - r,\;\,\quad\quad\quad{\rm if}\;{\cal X} = \emptyset \\
	r \cdot p^B\left( x \right),\;\;\quad{\rm if}\;{\cal X} = \left\{ x \right\}\\
	0,\quad\quad\quad\quad\quad\,{\rm if}\;\left| {\cal X} \right| \ge 2
	\end{array} \right. ,   \label{eq:Bernoulli}
	\end{align}
	where $r$ is the probability of target existence,
	and $p^B(\cdot)$ is the \textit{spatial probability density function} (SPDF) defined on $\mathbb{R}^n$.
	\item MPP: \\
	The density $f^M$ of an MPP $\cal X$ is given by
	\begin{align}
	f^M\left( {\cal X} \right) = {e^{ - \lambda }}\prod\limits_{x \in {\cal X}} {\lambda  \cdot p^M \left( x \right)} , \label{eq:MPP}
	\end{align}
	where $\lambda$ is the expected cardinality of $\cal X$, and $p^M(\cdot)$ the SPDF.
	\item IIDCP:\\
	The density $f^I$ of an IIDCP $\cal X$ is given by
	\begin{align}
	f^I\left( {\cal X} \right) = \left| {\cal X} \right|!\rho \left( {\left| {\cal X} \right|} \right)\prod\limits_{x \in {\cal X}} {p^I \left( x \right)} ,\label{eq:IIDCP}
	\end{align}
	where $n!$ denotes the factorial of integer $n$,
	$\rho$  the \textit{cardinality probability mass function} (CPMF),
	and $p^I(\cdot)$ the SPDF.
\end{enumerate}

Another important characterization of the RFS $\cal X$ is the PHD,
which is essentially the first-order statistical moment of its RFS density.
In order to define the PHD, 
it is necessary to introduce the following definition of set integral for a generic real-valued function $g({\cal X})$ of an RFS $\cal X$:
\begin{align}
\int {g\left( {\cal X} \right)\delta {\cal X}}  & \buildrel \Delta \over = \sum\limits_{n = 0}^\infty  {\frac{1}{{n!}}\int {g\left( {\left\{ {{x_1}, \ldots ,{x_n}} \right\}} \right)d{x_1} \cdots d{x_n}} }   \nonumber \\
& = g\left( \emptyset  \right) + \int {g\left( {\left\{ x \right\}} \right)dx}  \nonumber  \\
& \quad  + \frac{1}{2}\int {g\left( {\left\{ {{x_1},{x_2}} \right\}} \right)d{x_1}d{x_2}}  +  \cdots .   \label{eq:SI}
\end{align}
Then, the PHD $D(\cdot)$ associated to the RFS density $f(\cdot)$ is defined as follows \cite{mahler2007statistical}
\begin{align}
D\left( x \right) \buildrel \Delta \over = \int {f\left( {\left\{ x \right\}\bigcup {\cal X} } \right)\delta {\cal X}} .
\end{align}
As a result,
the PHDs of the three considered types of multiobject densities are given respectively as follows:
\begin{itemize}
	\item[-] the PHD of a Bernoulli RFS is $D^B(x)=r\cdot p^B(x)$;
	\item[-] the PHD of an MPP is $D^M(x)=\lambda \cdot p^M(x)$;
	\item[-] the PHD of an IIDCP is $D^I(x)={\widehat N} \cdot p^I\left( x \right)$, where ${\widehat N}$ is the mean value of the CPMF $\rho(\cdot)$.
\end{itemize}

\begin{remark}  \label{rem:IIDCP}
	As a matter of fact, both Bernoulli RFSs and MPPs are special cases of IIDCPs.
	Specifically, a
	Bernoulli RFS is an IIDCP with
	$\rho(1)=r$, $\rho(0)=1-r$, and $\rho(n)=0$ for any $n>1$.
	Conversely, an
	MPP is an IIDCP having Poisson CPMF with mean $\lambda$.
	Nevertheless, it is also important to deserve special attention to Bernoulli and MPP densities due to the following reasons.
	\begin{itemize}
		\item[-] When the presence of at most one object is assumed,
		the Bernoulli RFS model is preserved under the Bayesian recursion.
		\item[-] An MPP is fully charactered by its PHD, i.e. the MPP density can be exactly recovered from the PHD as $f\left( {\cal X} \right) = {e^{ - \int {D\left( x \right)dx} }}\prod\nolimits_{x \in {\cal X}} {D\left( x \right)}$. 
		Then, only the PHD of the multiobject RFS needs to be propagated under the MPP model which, therefore, requires much less computational burden compared to the IIDCP model.
	\end{itemize}
\end{remark}

\begin{remark}
	Several types of MFs can be employed to recursively propagate  the RFS density with measurements collected at suitable sampling instants.
	In particular,
	the Bernoulli filter or PHD filter or CPHD filter can be employed whenever the multiobject is modeled by, respectively, a Bernoulli RFS \cite{ristic2013tutorial},
	an MPP \cite{mahler2003multitarget} or an
	 IIDCP \cite{mahler2007phd}.
	In this paper, however,
	the focus is on distributed fusion of multiobject densities and, hence,
	details on local MFs will be omitted.
\end{remark}

\subsection{The GCI fusion rule}
The purpose of fusion is to combine the information coming from multiple agents
in order to enhance the overall performance of the MAS.
Let us denote the agent set of our MAS as $\cal N$, and
assume that the local RFS density $f^i(\cdot)$ of agent $i\in {\cal N}$ is available for fusion.
In \cite{mahler2000optimal},
the GCI rule is suggested to fuse multiple local RFS densities to get the following
fused density 
\begin{align}
{\overline f_{\rm GCI}}\left( {\cal X} \right) = \frac{{\prod\limits_{i \in {\cal N}} {{{\left[ {f^i\left( {\cal X} \right)} \right]}^{{\omega ^i}}}} }}{{\int {\prod\limits_{i \in {\cal N}} {{{\left[ {f^i\left( {\cal X}' \right)} \right]}^{{\omega ^i}}}} \delta {\cal X}'} }},   \label{eq:GCI}
\end{align}
where $\omega ^i$ are suitable non-negative weights summing up to unity.
The GCI fusion rule (\ref{eq:GCI}) possesses a nice information-theoretic interpretation.
As a matter of fact,
such a fusion rule can be obtained by finding 
the \textit{weighted Kullback-Leibler average} (WKLA) of local RFS densities $f^i$ defined as follows \cite{battistelli2014kullback,battistelli2015distributed}
\begin{align}
{\overline f_{\rm GCI}}\left( {\cal X} \right) = \arg \mathop {\min }\limits_f \sum\limits_{i \in {\cal N}} {{\omega ^i}{D_{\rm KL}}\left( {\left. f \right\|f^i} \right)} ,  \label{eq:WKLA}
\end{align}
where ${D_{\rm KL}}\left( {\left. {{f^1}} \right\|{f^2}} \right)$ is the Kullback-Leibler divergence (KLD) from $f^2$ to $f^1$ defined as 
\begin{align}
{D_{\rm KL}}\left( {\left. {{f^1}} \right\|{f^2}} \right) \buildrel \Delta \over = \int {{f^1}\left( {\cal X} \right)\log \frac{{{f^1}\left( {\cal X} \right)}}{{{f^2}\left( {\cal X} \right)}}\delta {\cal X}} . \label{eq:KLD}
\end{align}
Note that when evaluating the KLD between two ordinary (single-object) PDFs,
the set integral in (\ref{eq:KLD}) is actually replaced by the ordinary integral.
Substituting the densities of Bernoulli RFSs (\ref{eq:Bernoulli}), MPPs (\ref{eq:MPP}) and IIDCPs (\ref{eq:IIDCP})
into (\ref{eq:GCI}),
the corresponding fused RFS densities can be obtained as reported in \cite[(20)-(43)]{clark2010robust}.

\section{Distributed fusion with minimum weighted information loss}
\subsection{MIL fusion}
The WKLA defined in (\ref{eq:WKLA}) has further implications.
From the viewpoint of information theory,
the KLD from $f^2$ to $f^1$ (i.e. $D_{\rm KL}\left( {\left. {{f^1}} \right\|{f^2}} \right)$) represents 
the information gain when the beliefs from prior $f^2$ are revised into the posterior $f^1$,
or equivalently,
the information loss when $f^2$ is used in place of $f^1$ \cite{kullback1997information}.
Then the global density computed by the GCI rule (\ref{eq:GCI}) is actually the one that minimizes the average information gain from local densities.
In this section,
we  focus the attention on the fused RFS density which leads to MIL, defined as follows
\begin{align}
{\overline f_{\rm MIL}}\left( {\cal X} \right) = \arg \mathop {\min }\limits_f \sum\limits_{i \in {\cal N}} {{\omega ^i}{D_{\rm KL}}\left( {\left. {f^i} \right\|f} \right)} .  \label{eq:WKLAIL}
\end{align}
Since the remaining part of this paper will focus on computing the global density with MIL,
for the sake of convenience, we set ${\overline f_{\rm MIL}} \buildrel \Delta \over = \overline f$
without generating any confusion.
Further, in order to keep consistency on terminology, GCI fusion will be referred to as MWIG fusion hereafter. 

Though the difference between (\ref{eq:WKLAIL}) and (\ref{eq:WKLA}) 
is merely the exchange of arguments between local densities $f^i$ and the global one $f$ in the KLDs,
the rule (\ref{eq:WKLAIL}) admits a new interpretation from an information-theoretic viewpoint,
i.e. the global density is the one that unifies all information from local densities (minimal information loss).
The global RFS density $\overline f$ resulting from (\ref{eq:WKLAIL}) can be found by employing the following Lemma \ref{pro:WAIL}.
\begin{lemma}[MIL fusion rule]  \label{pro:WAIL}
	The RFS density $\overline f$ that leads to MIL of the local RFS densities $f^i$, $i\in {\cal N}$, is given by
	\begin{align}
	{\overline f}\left( {\cal X} \right) = \sum\limits_{i \in {\cal N}} {{\omega ^i} \cdot f^i\left( {\cal X} \right)}.   \label{eq:NEWfus}
	\end{align}
\end{lemma}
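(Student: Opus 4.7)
The plan is to reduce the problem to the classical Gibbs inequality applied to set integrals. I would first expand each divergence in the objective via the definition in (\ref{eq:KLD}) and separate the parts that depend on the decision variable $f$ from those that do not:
\begin{align}
\sum_{i\in\mathcal{N}} \omega^i D_{\rm KL}\!\left(\left. f^i \right\| f\right)
= \sum_{i\in\mathcal{N}} \omega^i \int f^i(\mathcal{X}) \log f^i(\mathcal{X})\,\delta\mathcal{X}
- \sum_{i\in\mathcal{N}} \omega^i \int f^i(\mathcal{X}) \log f(\mathcal{X})\,\delta\mathcal{X}. \nonumber
\end{align}
The first sum is a constant with respect to $f$. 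Exchanging the finite sum over $i$ with the set integral in the second term yields $-\int \big(\sum_{i\in\mathcal{N}} \omega^i f^i(\mathcal{X})\big) \log f(\mathcal{X})\,\delta\mathcal{X}$, which isolates the dependence on $f$ through a single weighted arithmetic mixture.

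Next I would introduce the candidate $\tilde{f}(\mathcal{X}) \triangleq \sum_{i\in\mathcal{N}} \omega^i f^i(\mathcal{X})$ and verify that it is a bona fide RFS density: non-negativity is immediate from $\omega^i \geq 0$ and $f^i \geq 0$, while the unit normalization $\int \tilde{f}(\mathcal{X})\,\delta\mathcal{X} = 1$ follows from the set-integral normalization of each $f^i$ together with $\sum_i \omega^i = 1$. Adding and subtracting $\int \tilde{f}(\mathcal{X})\log\tilde{f}(\mathcal{X})\,\delta\mathcal{X}$ allows me to rewrite
\begin{align}
\sum_{i\in\mathcal{N}} \omega^i D_{\rm KL}\!\left(\left. f^i \right\| f\right)
= C + D_{\rm KL}\!\left(\left. \tilde{f}\right\|f\right), \nonumber
\end{align}
where $C$ collects all terms independent of $f$.

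The conclusion then follows from the non-negativity of the KLD: $D_{\rm KL}(\tilde{f}\|f) \geq 0$ with equality if and only if $f = \tilde{f}$ (set-integral almost everywhere). Hence the minimizer is $\overline{f} = \tilde{f}$, which is exactly (\ref{eq:NEWfus}). The non-negativity of the set-integral KLD can be justified either by the Janossy decomposition, applying the standard Gibbs inequality to each symmetric $n$-body density and summing, or equivalently by applying Jensen's inequality to $-\log$ inside the set integral in (\ref{eq:KLD}).

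The step that needs the most care is the exchange of the summation and the set integral together with the validation of the Gibbs-type inequality in the set-integral setting: the set integral (\ref{eq:SI}) is itself an infinite series, so interchanging operations requires the usual integrability/absolute-convergence caveats. In practice this is not an obstacle for the RFS families considered in Section \ref{sec:RFS} (Bernoulli, MPP, IIDCP), since their densities decay rapidly enough in cardinality that the log-expectations appearing above are finite; a brief remark to this effect would suffice, after which the algebraic rearrangement and Gibbs' inequality close the argument.
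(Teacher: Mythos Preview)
Your proposal is correct and follows essentially the same route as the paper's proof: both expand the KLDs, isolate the $f$-dependent cross-entropy term $-\int \tilde f(\mathcal X)\log f(\mathcal X)\,\delta\mathcal X$ with $\tilde f=\sum_i\omega^i f^i$, add and subtract $\int \tilde f\log\tilde f\,\delta\mathcal X$ to recognize the residual as $D_{\rm KL}(\tilde f\|f)$, and conclude by non-negativity of the KLD. Your version is in fact slightly more careful than the paper's in that you explicitly check $\tilde f$ is a valid RFS density and flag the integrability caveat for the set-integral interchange, but the argument is otherwise the same.
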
 
Proof: First of all, we would like to point out that this result has already been presented in \cite{abbas2009kullback} with reference to discrete probability distributions (i.e., PMFs).
Here, just for the convenience of readers, 
we extend the result of \cite{abbas2009kullback} to RFS densities and provide a short proof in Appendix \ref{app:L1}.



\subsection{Fusion of RFS densities with minimum information loss}
The previous section has shown that the RFS density
that minimizes the IL turns out to be the weighted sum of the involved local RFS densities given by (\ref{eq:NEWfus}).
In this subsection,
the MIL fusion is further exploited to fuse three specific types of RFS densities (introduced in Section \ref{sec:RFS})
which have been widely employed in the context of multiobject filtering.

If the multiobject state is modeled as a Bernoulli RFS,
the MIL-fused RFS density can be obtained by the following proposition.
\begin{proposition}[Optimal fused Bernoulli RFS density under MIL criterion]   \label{pro:WAILBer}
	If the local densities $f^{B,i}$, for each agent $i\in {\cal N}$, are Bernoulli with existence probability $r^i$ and SPDF $p^{B,i}$,
	then the fused density $\overline f^B$ 
	according to the rule (\ref{eq:NEWfus}) is still Bernoulli with existence probability $\overline r$ and SPDF $\overline p^B$ given by
	\begin{align}
	\overline r &= \sum\limits_{i \in {\cal N}} {{\omega ^i} r^i } ,  \label{eq11} \\
	\overline p^B(x) &= \frac{{\sum\limits_{i \in {\cal N}} {{\omega ^i}r^{i}\cdot p^{B,i}\left( x \right)} }}{{\sum\limits_{i \in {\cal N}} {{\omega ^i}r^{i}} }}. \label{eq12}
	\end{align}
\end{proposition}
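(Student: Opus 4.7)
The plan is to apply Lemma \ref{pro:WAIL} and then verify, by direct case analysis on $|\mathcal{X}|$, that the resulting weighted sum of Bernoulli densities has exactly the functional form (\ref{eq:Bernoulli}) with parameters $\overline r$ and $\overline p^B$ as claimed.

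First, invoking Lemma \ref{pro:WAIL} with $f^i = f^{B,i}$ gives
\begin{align}
\overline f(\mathcal{X}) = \sum_{i \in \mathcal{N}} \omega^i f^{B,i}(\mathcal{X}).  \nonumber
\end{align}
I would then evaluate the right-hand side pointwise on the three disjoint argument sets that define a Bernoulli density in (\ref{eq:Bernoulli}). For $\mathcal{X}=\emptyset$, each $f^{B,i}(\emptyset)=1-r^i$, so using the fact that the fusion weights satisfy $\sum_i \omega^i = 1$, I obtain $\overline f(\emptyset) = 1 - \sum_i \omega^i r^i = 1-\overline r$. For $\mathcal{X}=\{x\}$, each $f^{B,i}(\{x\}) = r^i p^{B,i}(x)$, so $\overline f(\{x\}) = \sum_i \omega^i r^i p^{B,i}(x)$; factoring out $\overline r$ from the right-hand side yields $\overline f(\{x\})=\overline r\cdot \overline p^B(x)$ with $\overline p^B$ as in (\ref{eq12}). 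For $|\mathcal{X}|\ge 2$, every summand is zero, so $\overline f(\mathcal{X})=0$. Comparing these three expressions to (\ref{eq:Bernoulli}) identifies $\overline f$ as a Bernoulli density with parameters $(\overline r, \overline p^B)$.

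The remaining step is a consistency check that $\overline p^B$ is a legitimate SPDF. Non-negativity is immediate since all weights, existence probabilities and local SPDFs are non-negative, while normalization follows from
\begin{align}
\int \overline p^B(x)\,dx = \frac{\sum_{i} \omega^i r^i \int p^{B,i}(x)\,dx}{\sum_{i} \omega^i r^i} = \frac{\sum_{i}\omega^i r^i}{\overline r} = 1. \nonumber
\end{align}
This tacitly requires $\overline r>0$; the degenerate case $\overline r=0$ forces $r^i=0$ for every agent with $\omega^i>0$, so $\overline f$ is concentrated on $\emptyset$ and the SPDF is immaterial, consistently with the Bernoulli formalism.

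I do not foresee a genuine obstacle: the argument is a direct verification that the arithmetic mean of Bernoulli densities stays in the Bernoulli family, relying only on the piecewise definition (\ref{eq:Bernoulli}) and the normalization of the weights. The only mild subtlety worth mentioning explicitly in the write-up is the aforementioned degenerate case $\overline r = 0$, which is handled vacuously.
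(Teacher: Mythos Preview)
Your proposal is correct and follows essentially the same approach as the paper: a direct case analysis on $|\mathcal{X}|\in\{0,1,\ge 2\}$ applied to the weighted sum $\sum_i \omega^i f^{B,i}$, using $\sum_i\omega^i=1$ to obtain $1-\overline r$ in the empty case and factoring out $\overline r$ in the singleton case. Your additional remarks on the validity of $\overline p^B$ as an SPDF and the degenerate case $\overline r=0$ are sound refinements that the paper omits.
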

Proof: see Appendix \ref{app:Ber}. \\

It can be noticed that the optimal (MIL) fusion of Bernoulli densities still provides a Bernoulli density,
with existence probability and SPDF that can be computed in  closed-form.
In the context of multiobject filtering,
it is more common to model the multiobject RFS as IIDCP or MPP
so as to handle the possible presence of multiple targets.
However,
it can be easily checked that, unlike the Bernoulli case,
the weighted sum of densities is not closed in the families of MPP and IIDCP densities.
Consequently,
the fusion rule (\ref{eq:NEWfus}) is not applicable
to such RFS families since the fused RFS density is often employed as prior information for the next recursion.
However, it turns out that the MPP, respectively IIDCP, density yielding optimal (MIL) fusion over a set of MPP, respectively IIDCP, densities
can be found  by imposing the constraint in (\ref{eq:WKLAIL}) that $f(\cdot)$ belongs to the specific family of densities (\ref{eq:MPP}), respectively (\ref{eq:IIDCP}).
Specifically, when the local RFS densities are IIDCP, the following results holds.
\begin{proposition}[Optimal fused IIDCP under MIL criterion]  \label{pro:WAILIIDCP}
	If the local densities $f^{I,i}$, for each agent $i\in {\cal N}$, are IIDCP with CPMF $\rho^i$ and SPDF $p^{I,i}$,
	then the optimal fused IIDCP 
	leading to MIL has density $\overline f^I$ characterized
	by CPMF $\overline \rho$ and SPDF $\overline p^I$ given as follows
	\begin{align}
	{\overline \rho }\left( n \right) &= \sum\limits_{i \in {\cal N}} {{\omega ^i}\rho ^i\left( n \right)} ,  \label{CPMF1} \\
	{\overline p^I}\left( x \right) &= \frac{1}{{\sum\limits_{j \in {\cal N}} {{\omega ^j}\hat N^j} }}\sum\limits_{i \in {\cal N}} {{\omega ^i}\hat N^i \cdot p^{I,i}\left( x \right)} \label{SPDF1} ,
	\end{align}
	where $\widehat N^i = \sum\nolimits_{n = 0}^\infty  {n \cdot \rho ^i\left( n \right)} $ denotes the expected target number for agent $i$.
\end{proposition}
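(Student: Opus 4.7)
The plan is to minimize the weighted sum $J(\overline f^I) = \sum_{i\in\mathcal N}\omega^i D_{\rm KL}(f^{I,i}\|\overline f^I)$ over the constrained family of IIDCP densities, i.e., over pairs $(\overline\rho,\overline p^I)$ with $\overline\rho$ a CPMF on $\mathbb{N}$ and $\overline p^I$ an SPDF on $\mathbb{R}^n$. First I would drop the $f^{I,i}$-entropy terms (independent of $\overline f^I$) and focus on the cross-entropy $-\int f^{I,i}(\mathcal X)\log\overline f^I(\mathcal X)\,\delta\mathcal X$. Since $\overline f^I$ is IIDCP, using \eqref{eq:IIDCP} I can write
\begin{equation*}
\log\overline f^I(\mathcal X)=\log(|\mathcal X|!)+\log\overline\rho(|\mathcal X|)+\sum_{x\in\mathcal X}\log\overline p^I(x).
\end{equation*}

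Next I would unfold the set integral via \eqref{eq:SI} against the IIDCP form of $f^{I,i}$. The key observation is that for any $n$ the factor $n!\rho^i(n)\prod_{k=1}^n p^{I,i}(x_k)/n!$ integrated against the first two terms above yields $\rho^i(n)\log(n!)+\rho^i(n)\log\overline\rho(n)$ (using $\int p^{I,i}=1$), while integration against $\sum_{k=1}^n\log\overline p^I(x_k)$ gives, by symmetry and normalization of $p^{I,i}$, the quantity $n\,\rho^i(n)\int p^{I,i}(x)\log\overline p^I(x)\,dx$. Summing over $n$ and recalling $\widehat N^i=\sum_n n\rho^i(n)$, the $\overline f^I$-dependent part of $J$ becomes
\begin{equation*}
-\sum_{i\in\mathcal N}\omega^i\!\left[\sum_{n=0}^\infty\rho^i(n)\log\overline\rho(n)+\widehat N^i\!\int p^{I,i}(x)\log\overline p^I(x)\,dx\right]\!+c,
\end{equation*}
with $c$ independent of $\overline f^I$. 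Swapping the order of summation, this \emph{decouples} into the sum of a term depending only on $\overline\rho$ and a term depending only on $\overline p^I$.

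At this point the two subproblems can be solved independently by Gibbs' inequality (equivalently, a Lagrange-multiplier argument enforcing normalization). The CPMF problem reduces to minimizing $-\sum_n a(n)\log\overline\rho(n)$ with $a(n)=\sum_i\omega^i\rho^i(n)$; since $\sum_n a(n)=\sum_i\omega^i=1$, $a$ is itself a CPMF and the minimum is achieved at $\overline\rho=a$, yielding \eqref{CPMF1}. Analogously, the SPDF problem minimizes $-\int b(x)\log\overline p^I(x)\,dx$ with $b(x)=\sum_i\omega^i\widehat N^i\,p^{I,i}(x)$; normalizing $b$ by $\int b=\sum_j\omega^j\widehat N^j$ yields \eqref{SPDF1}.

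The only real obstacle is the bookkeeping when expanding the set integral: one must verify that the summation $\sum_{k=1}^n\log\overline p^I(x_k)$ integrates cleanly against $\prod_k p^{I,i}(x_k)$ to give exactly $n$ identical contributions, and that the $\log(n!)$ contribution contains no hidden $\overline\rho$-dependence. Once this is done, the decoupling and the Gibbs-inequality step are standard. I would finally remark that the proof implicitly covers Proposition 1: a Bernoulli RFS is the special case in which $\overline\rho$ is supported on $\{0,1\}$, and the formulas \eqref{eq11}--\eqref{eq12} follow directly from \eqref{CPMF1}--\eqref{SPDF1} with $\widehat N^i=r^i$.
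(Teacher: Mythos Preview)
Your proposal is correct and follows essentially the same route as the paper. The paper expands the KLD between two IIDCP densities to obtain the identity $D_{\rm KL}(f^{I,i}\|f^I)=D_{\rm KL}(\rho^i\|\rho)+\widehat N^i\,D_{\rm KL}(p^{I,i}\|p^I)$, which decouples the minimization exactly as you describe, and then invokes Lemma~\ref{pro:WAIL} for each subproblem; your cross-entropy/Gibbs-inequality argument is the same computation with the $f^{I,i}$-entropy terms stripped out up front rather than carried along.
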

\begin{proof}
	First, it is recalled from (\ref{eq:IIDCP}) that an IIDCP $f^I$ is completely characterized by its CPMF $\rho$ and SPDF $p^I$.
	Since the aim is to find the optimal IIDCP density according to the MIL criterion,
	it is straightforward to impose a constraint in the MIL optimization (\ref{eq:WKLAIL}), as follows
	\begin{align}
	{\overline f^I}\left( {\cal X} \right) = &\arg \mathop {\min }\limits_{f^I } \sum\limits_{i \in {\cal N}} {{\omega ^i}{D_{\rm KL}}\left( {\left. {f^{I,i}} \right\|f} \right)} ,  \nonumber  \\
	&s.t. \quad  f^I\left( {\cal X} \right) = n! \, \rho \left( n \right)\prod\limits_{x \in {\cal X}} {p^I\left( x \right)},   \label{eq:WAILIIDCP}
	\end{align}
	which amounts to directly looking for the CPMF $\rho$ and SPDF $p^I$ characterizing the IIDCP density $f^I$.
	Replacing the definitions of $f^{I,i}$ and $f^I$ into the definition of KLD, we get (\ref{eq:KLDIIDCP}).
	\begin{figure*}[!t]
		\normalsize
		\begin{align}
		{D_{\rm KL}}\left( {\left. {f^{I,i}} \right\|f^I} \right) &= \int {f^{I,i}\left( {\cal X} \right)\log \frac{{f^{I,i}\left( {\cal X} \right)}}{{f^I\left( {\cal X} \right)}}\delta {\cal X}}   \nonumber  \\
		& = \sum\limits_{n = 0}^\infty  {\frac{1}{{n!}}\int {n!\rho ^i\left( n \right)\prod\limits_{m = 1}^n {p^{I,i}\left( {{x_m}} \right)} \log \frac{{n!\rho ^i\left( n \right)\prod\limits_{m = 1}^n {p^{I,i}\left( {{x_m}} \right)} }}{{n!\rho \left( n \right)\prod\limits_{m = 1}^n {p^I\left( {{x_m}} \right)} }}d{x_1} \ldots d{x_n}} }   \nonumber  \\
		& = \sum\limits_{n = 0}^\infty  {\rho ^i\left( n \right)\int {\prod\limits_{m = 1}^n {p^{I,i}\left( {{x_m}} \right)} \left[ {\log \frac{{\rho ^i\left( n \right)}}{{\rho \left( n \right)}} + \sum\limits_{m = 1}^n {\log \frac{{p^{I,i}\left( {{x_m}} \right)}}{{p^I\left( {{x_m}} \right)}}} } \right]d{x_1} \ldots d{x_n}} }   \nonumber  \\
		& = \sum\limits_{n = 0}^\infty  {\rho _{t|t}^i\left( n \right)\log \frac{{\rho ^i\left( n \right)}}{{\rho \left( n \right)}}}  + \sum\limits_{n = 0}^\infty  {\rho ^i\left( n \right)\sum\limits_{m = 1}^n {\int {\prod\limits_{m = 1}^n {p^{I,i}\left( {{x_m}} \right)} \log \frac{{p^{I,i}\left( {{x_m}} \right)}}{{p^I\left( {{x_m}} \right)}}d{x_1} \ldots d{x_n}} } }   \nonumber  \\
		& = {D_{\rm KL}}\left( {\left. {\rho^i} \right\|\rho } \right) + \widehat N^i \cdot {D_{\rm KL}}\left( {\left. {p^{I,i}} \right\|p^I} \right)  \label{eq:KLDIIDCP}
		\end{align}
		\hrulefill
		\vspace*{4pt}
	\end{figure*}

	Then, substituting (\ref{eq:KLDIIDCP}) into (\ref{eq:WAILIIDCP}), we obtain
	\begin{align}
	 &{\overline f^I}\left( {\cal X} \right) \nonumber  \\
	 &= \arg \mathop {\min }\limits_{\left( {\rho ,p^I} \right)} \sum\limits_{i \in {\cal N}} {{\omega ^i}\left[ {{D_{\rm KL}}\left( {\left. {\rho ^i} \right\|\rho } \right) + \widehat N^i{D_{\rm KL}}\left( {\left. {p^{I,i}} \right\|p^I} \right)} \right]} \nonumber \\
	&= \arg \mathop {\min }\limits_\rho  \sum\limits_{i \in {\cal N}} {{\omega ^i}{D_{\rm KL}}\left( {\left. {\rho ^i} \right\|\rho } \right)} \nonumber  \\
	& \;\; +  \left[ {\sum\limits_{j \in {\cal N}} {{\omega ^j}\hat N^j} } \right] \cdot \arg \mathop {\min }\limits_{p^I} \sum\limits_{i \in {\cal N}} {\frac{{{\omega ^i}\hat N^i}}{{\sum\limits_{j \in {\cal N}} {{\omega ^j}\hat N^j} }}{D_{\rm KL}}\left( {\left. {p^{I,i}} \right\|p^I} \right)}.   \label{eq:WKLAVA}
	\end{align}
	Finally, by directly applying Proposition \ref{pro:WAIL},
	it is easy to check that $\overline{\rho}(n)$ is given by (\ref{CPMF1}) and $\overline{p}(x) = \overline{p}^I(x)$ by (\ref{SPDF1}).
\end{proof}

\begin{remark}
	Proposition \ref{pro:WAILIIDCP} implies that, according to the MIL criterion,
	the fusion of multiple IIDCP densities can be performed by independently fusing CPMFs and SPDFs respectively.
	Note that even though this strategy has been heuristically adopted in \cite{yu2016distributed},
	its information-theoretic meaning is first revealed in this paper which, therefore,
	provides a theoretical basis for applying such a fusion rule.
\end{remark}

Since an MPP is actually a special case of IIDCP restricted to Poisson CPMF 
and the RFS density of an MPP is completely charactered by its PHD,
the result of Proposition \ref{pro:WAILIIDCP} can be extended to find the RFS density of the optimal global MPP 
when all the involved local RFS densities are MPP,
as shown in the following proposition.

\begin{proposition}[Optimal fused MPP under MIL criterion]   \label{pro:WAILMPP}
	If the local densities $f^{M,i}$, for each agent $i\in {\cal N}$, are MPP with expected target number $\lambda^i$, SPDF $p^{M,i}$ and local PHD $D^{M,i}$,	
	then the optimal fused MPP 
	leading to MIL has density $\overline f^M$ characterized
	by expected target number $\overline \lambda$, and SPDF $\overline p^M$ given as follows
	
	\begin{align}
	\overline \lambda &= \sum\limits_{i \in {\cal N}} {{\omega ^i}\lambda^i},\label{eq:MPPcard} \\
	{\overline p^M}\left( x \right) &= \frac{1}{{\sum\limits_{j \in {\cal N}} {{\omega ^j}\lambda ^j} }}\sum\limits_{i \in {\cal N}} {{\omega ^i}\lambda ^i \cdot p^{M,i}\left( x \right)},   \label{eq:MPPSPDF}
	\end{align}
	and consequently, the corresponding PHD $\overline D^M$ is given by
	\begin{align}
	\overline D^M (x) = \sum\limits_{i \in {\cal N}} {{\omega ^i}D^{M,i}\left( x \right)}.  \label{eq:MPPPHD}
	\end{align}
\end{proposition}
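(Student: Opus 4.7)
The plan is to treat the MPP case as a restricted instance of Proposition~\ref{pro:WAILIIDCP}, exploiting the observation in Remark~\ref{rem:IIDCP} that an MPP is an IIDCP whose CPMF is Poisson and whose expected cardinality $\lambda$ coincides with the Poisson mean. Concretely, I would impose in the MIL optimization \eqref{eq:WKLAIL} the structural constraint that $f(\cdot)$ has the MPP form \eqref{eq:MPP}, which confines the CPMF $\rho$ to the one-parameter Poisson family indexed by $\lambda$, while leaving the SPDF $p^M$ free to range over all probability densities on $\mathbb{R}^n$.

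The KLD decomposition derived in \eqref{eq:KLDIIDCP} is valid for any pair of IIDCPs and hence applies to MPPs as a special case, with $\widehat{N}^i=\lambda^i$. This produces a separable cost of the form $\sum_i \omega^i D_{\rm KL}(\rho^i\|\rho_\lambda) + \sum_i \omega^i \lambda^i D_{\rm KL}(p^{M,i}\|p^M)$, in which the first sum depends only on $\lambda$ and the second only on $p^M$. The joint minimum can therefore be obtained by solving the two subproblems independently.

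For the SPDF subproblem, a direct application of Lemma~\ref{pro:WAIL} to the single-object PDFs $p^{M,i}$ with normalized weights $\omega^i\lambda^i/\sum_j \omega^j\lambda^j$ produces the weighted arithmetic mean \eqref{eq:MPPSPDF}. For the Poisson-mean subproblem, substituting the Poisson PMFs $\rho^i(n)=e^{-\lambda^i}(\lambda^i)^n/n!$ and $\rho(n)=e^{-\lambda}\lambda^n/n!$ reduces the cost, up to an additive constant and using $\sum_i\omega^i=1$, to the scalar expression $\lambda - \bigl(\sum_i\omega^i\lambda^i\bigr)\log\lambda$, which is strictly convex in $\lambda>0$; setting its derivative to zero immediately yields $\overline{\lambda}=\sum_i\omega^i\lambda^i$, i.e.\ \eqref{eq:MPPcard}. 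Formula \eqref{eq:MPPPHD} for the fused PHD then follows by writing $\overline{D}^M(x)=\overline{\lambda}\,\overline{p}^M(x)$ and cancelling the common normalizing factor against the numerator of \eqref{eq:MPPSPDF}.

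The main step that needs care is verifying that the Poisson restriction does not destroy the separability already exploited in the IIDCP proof. This is essentially automatic, since the SPDF-part of the cost is independent of $\rho$, so constraining $\rho$ to the Poisson submanifold only affects the first subproblem; the second subproblem remains an unrestricted minimization over SPDFs and can be handled verbatim via Lemma~\ref{pro:WAIL}. Once this is noted, the remainder of the argument reduces to the scalar convex minimization above, which is routine.
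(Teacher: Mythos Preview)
Your proposal is correct and follows essentially the same route as the paper: both impose the MPP constraint in the MIL optimization, invoke the separable KLD decomposition \eqref{eq:KLDIIDCP} with $\widehat N^i=\lambda^i$, handle the SPDF subproblem via Lemma~\ref{pro:WAIL}, and reduce the CPMF subproblem to a scalar minimization using the closed form $D_{\rm KL}(\sigma_{\lambda^i}\|\sigma_\lambda)=\lambda^i\log(\lambda^i/\lambda)+\lambda-\lambda^i$. The only cosmetic difference is the final step: you minimize the scalar cost by a direct convexity/derivative argument, whereas the paper rearranges the same expression as $D_{\rm KL}\bigl(\sigma_{\sum_i\omega^i\lambda^i}\,\big\|\,\sigma_\lambda\bigr)+\text{const}$ and reads off the minimizer from nonnegativity of KLD---both are elementary and equivalent.
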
 
Proof: see Appendix \ref{app:MPP}. \\

Due to the fact that an MPP is completely specified by its PHD, only the computation of the fused PHD via (\ref{eq:MPPPHD}) is needed in practice,
and it can be utilized as prior information for the PHD filter recursion \cite{mahler2003multitarget}.

Further, one can intuitively compare the behavior of MWIG (wherein the fused PHD is computed as weighted geometric average of local PHDs) and MIL (wherein the fused PHD is computed as weighted arithmetic average of local PHDs) rules 
in presence of missed detection and false alarms as follows.
\begin{itemize}
	\item[-] If a missed detection of a given object at position $x$ occurs for a specific agent $j\in {\cal N}$,
	its PHD would be very low, i.e. $D^j(x) \approx 0$.
	As a consequence, even PHDs $D^i(x)$ of other nodes $i \in {\cal N} \backslash \{j\}$ are close to $1$,
	the resulting fused PHD by MWIG rule would be low at position $x$.
	However, this would not happen in MIL fusion.
	In this respect, the fusion rule (\ref{eq:NEWfus}) is less sensitive to object misdetections.
	\item[-] On the other hand, false alarms are more likely to be maintained by the MIL fusion rule.
\end{itemize}
To summarize, it is not possible to state that either of the two fusion rules (i.e., MWIG or MIL) is better than the other one.
In practice, the fusion strategy to be adopted should be chosen depending on the specific  scenario of interest.

\begin{remark}
	A note on dealing with agents having different \textit{fields-of-view} (FoVs) is in order.
	Consider the fusion of local PHDs coming from two agents whose FoVs are partially overlapping, 
	assuming that there is an object located in the exclusive FoV of each agent,
	as shown in (a) and (b) of Figure \ref{fig:DFOV}.
	As a result,
	the fused PHD via MWIG rule tends to become null at each point,
	while the MIL rule is able to compensate the PHD outside the common FoV so that,
	after several time instances, both objects can be detected.
\end{remark}

\begin{figure*}[tb]
	\centering
	\begin{tabular}{cc}
		(a): $D^1(x)$ & (b): $D^2(x)$ \\
		\includegraphics[width=0.9\columnwidth]{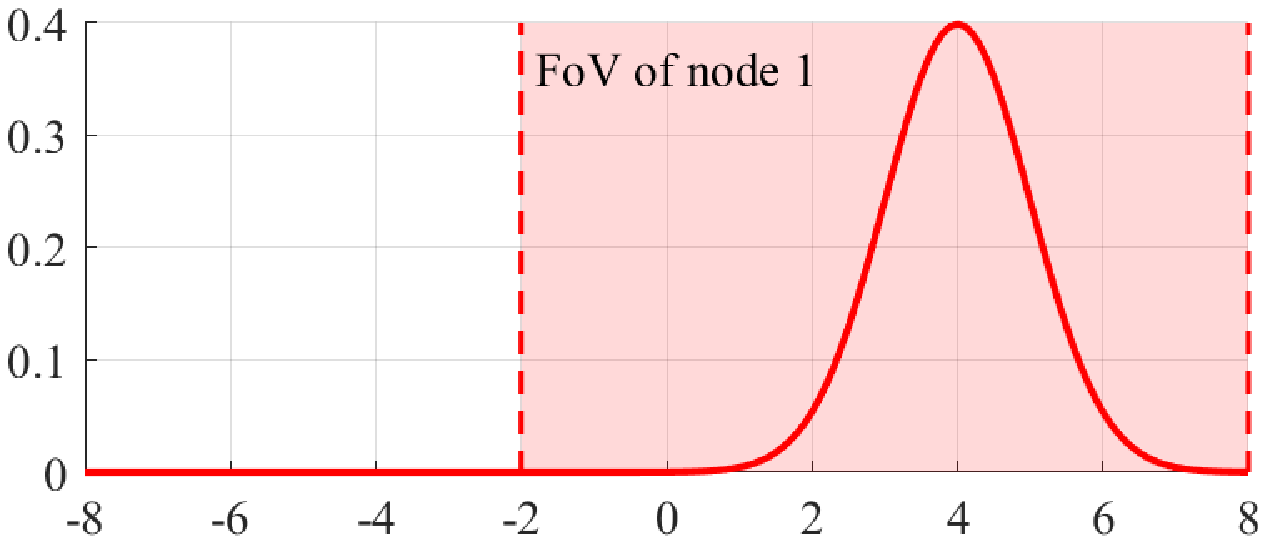} & \includegraphics[width=0.9\columnwidth]{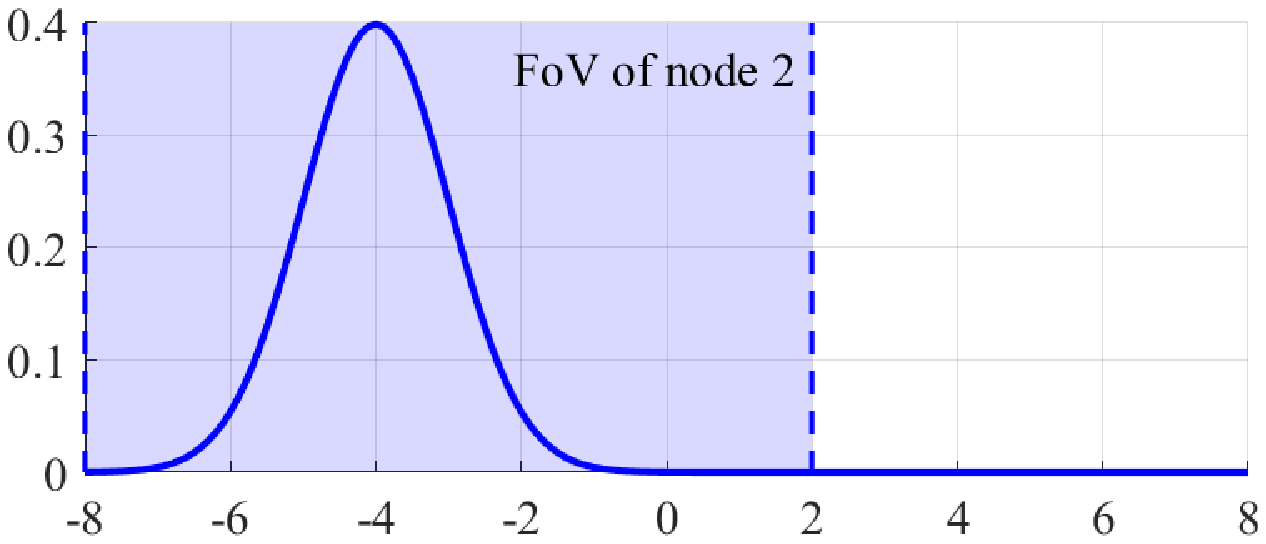} \\
		(c): $\sqrt {{D^1}\left( x \right)*{D^2}\left( x \right)}$ & (d): $0.5*D^1(x)+0.5*D^2(x)$ \\
		\includegraphics[width=0.9\columnwidth]{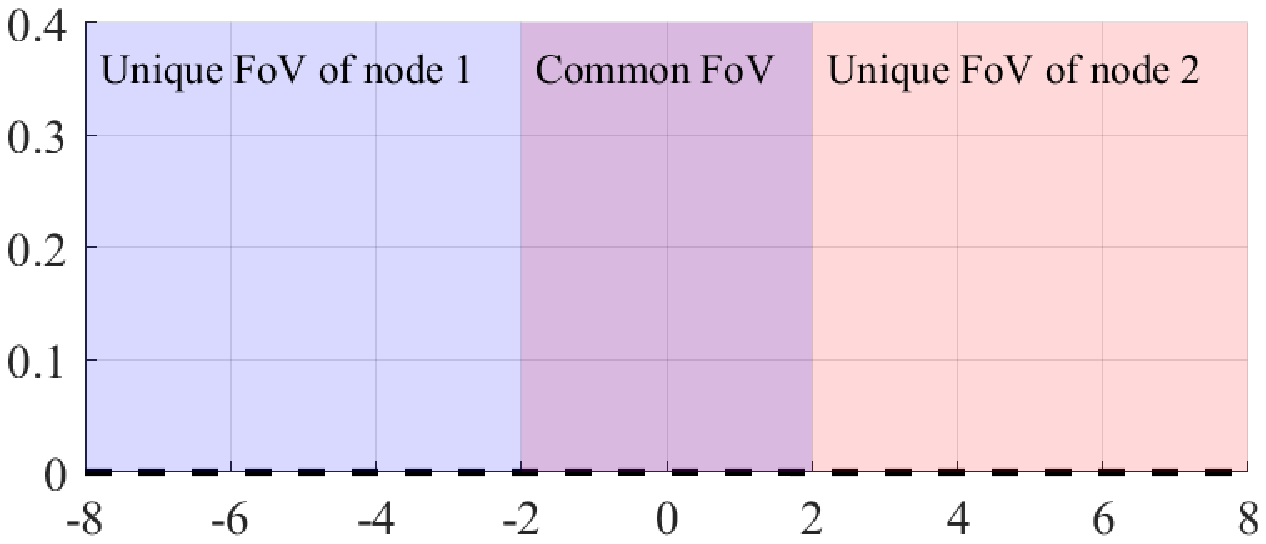} & \includegraphics[width=0.9\columnwidth]{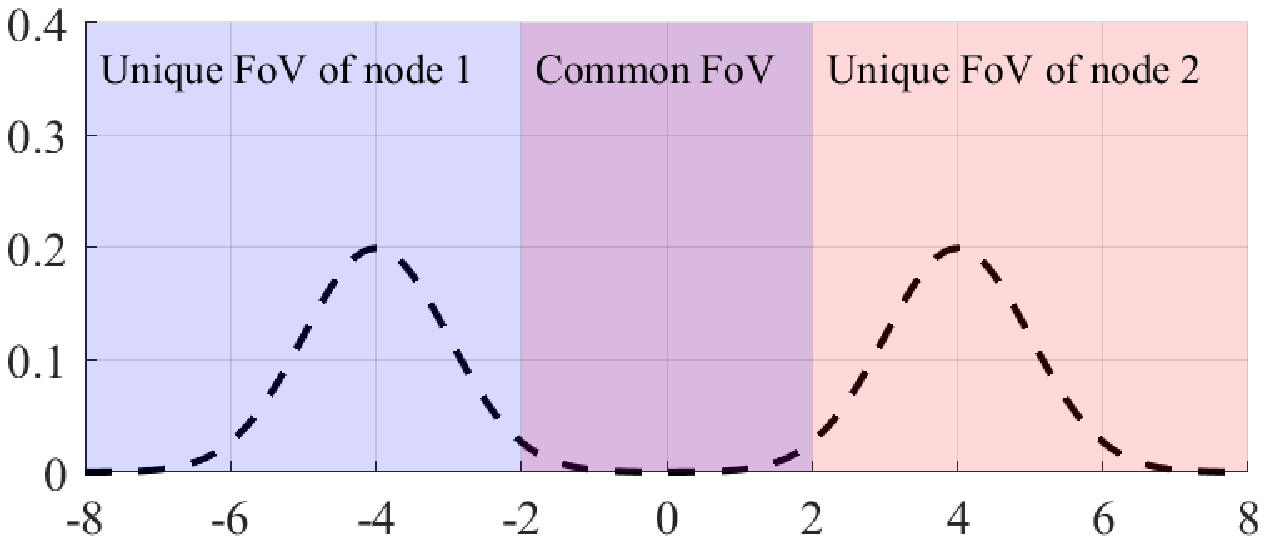} 
	\end{tabular}
	\caption{A pictorial view of fusion with different FoVs: (a) local PHD of node $1$; (b) local PHD of node $2$; (c) fused PHD
	 via MWIG rule; (d) fused PHD via MIL rule.}
	\label{fig:DFOV}
\end{figure*}

\section{Implementation issues}  \label{sec:IMP}
In this section,
some issues concerning implementation of the MIL fusion method are discussed.
Since the IIDCP is a generalization of both the Bernoulli RFS and MPP,
we will focus only on fusion of multiple IIDCP densities,
as the results can be directly applied to the other two types of RFS densities.

In the context of \textit{distributed multitarget tracking} (DMT),
at each time instance,
propagation of the local RFS density is performed by each agent via some multiobject filter \cite{mahler2007statistical} before carrying out fusion.
Since integrals are always involved in the prediction and update steps of all multiobject filters,
no analytical solution can be found.
Consequently approximate computation techniques have to be used.
In this respect, the two most commonly adopted choices
are the \textit{Gaussian mixture} (GM) \cite{vo2006gaussian,vo2007analytic} 
and \textit{sequential Monte Carlo} (SMC) \cite{vo2005sequential} implementations,
where the SPDFs are approximately represented by GMs and, respectively, particle sets.
In this subsection, the purpose is to show how to combine 
the GM and SMC implementations with the proposed fusion method.
\begin{itemize}
	\item[-] \textbf{Fusion with SMC implementation:}\\
	Suppose that the SPDF $p^{I,i}$ of the local RFS density $f^{I,i}$, relative to time $t$ and agent $i\in{\cal N}$, is approximated by a set of particles as 
	\begin{align}
	p^{I,i}\left( x \right) \cong \sum\limits_{m = 1}^{J^i} {\alpha^{i,m}{\delta _{x^{i,m}}}\left( x \right)} ,
	\end{align}
	where ${\delta _x}(\cdot)$ is the Dirac delta.
	Then, the SPDF of the fused RFS density is given by
	\begin{align}
	{\overline p^I}\left( x \right) = \sum\limits_{i \in {\cal N}} {{{\tilde \omega }^i}p^{I,i}\left( x \right)}  = \sum\limits_{i \in {\cal N}} {\sum\limits_{m = 1}^{J^i} {{{\tilde \omega }^i}\alpha ^{i,m}{\delta _{x^{i,m}}}\left( x \right)} } ,  \label{eq:SMCfus}
	\end{align}
	where ${{{\tilde \omega }^i} = {\omega ^i}\hat N^i/(\sum\nolimits_{j \in {\cal N}} {{\omega ^j}\hat N^j} })$.
	Note that the number of particles increases to ${\sum\nolimits_{i \in {\cal N}} {J^i} }$ after fusion via (\ref{eq:SMCfus}),
	thus leading to an increase of computational load at the next time $t+1$.
	A simple trick that can be exploited to overcome such a difficulty, is to
	cancel the resampling step in local multiobject filtering \cite[Section III-F]{vo2005sequential}
	and then  resample particles to a total amount of $J$ (which can be determined by the estimated number of targets obtained from (\ref{eq:SMCfus})).
	\item[-] \textbf{Fusion with GM implementation:} \\
	Suppose now that  the SPDF $p^{I,i}$ of the local RFS density $f^{I,i}$, relative to time $t$ and agent $i\in{\cal N}$, is approximated by a GM as
	\begin{align}
	p^{I,i}\left( x \right) \cong \sum\limits_{m = 1}^{J^i} {\alpha ^{i,m}{\cal G}\left( {x;\mu ^{i,m},P^{i,m}} \right)} ,
	\end{align}
	where ${\cal G}\left( {x;\mu ,P} \right)$ denotes a Gaussian PDF with mean $\mu$ and covariance matrix $P$.
	Then, the SPDF of the fused RFS density is given by
	\begin{align}
	{\overline p^I}\left( x \right) = \sum\limits_{i \in {\cal N}} {\sum\limits_{m = 1}^{J^i} {{{\tilde \omega }^i}\alpha ^{i,m}{\cal G}\left( {x;\mu ^{i,m},P^{i,m}} \right)} } .
	\end{align}
	Similarly to SMC implementation, the number of \textit{Gaussian components} (GCs) increases to ${\sum\nolimits_{i \in {\cal N}} {J^i} }$ after fusion, 
	again leading to an increase of computational burden.
	Hence, suitable pruning and merging procedures \cite[Table II]{vo2006gaussian} should be performed in order to reduce the number of GCs.
\end{itemize}
\begin{remark}
	Normally, a huge number of particles are necessary to approximate the SPDF,
	thus implying heavy transmission load.
	In order to reduce communication bandwidth within the WSN,
	one can further approximate particle sets by GMs with reduced number of GCs \cite{li2018local}.
	In this way, fusion can be performed via GM implementation on the approximated GMs.
	After fusion, the resulting GM can be converted back to SMC representation by mean of a suitable sampling method \cite{li2017distributed}.
\end{remark}
\begin{remark}
	When performing MWIG fusion with GM implementation \cite{battistelli2013consensusA}, 
	the need arises to approximately compute the power of GMs.
	Although there exist approximate methods \cite{gunay2016chernoff} to accomplish such a task with satisfactory accuracy,
	a non negligible extra computational load resource is required to perform such approximation.
	By contrast, MIL fusion of GMs directly provides a fused GM without any approximation,
	thus providing enhanced accuracy and computational savings.
\end{remark}

\section{Performance evaluation}
In this section, the performance of MIL fusion is assessed via simulation experiments 
concerning \textit{distributed multitarget tracking} (DMT) over a \textit{wireless sensor network} (WSN) \cite{battistelli2013consensusA},
which represents a typical application of MAS.

\subsection{Consensus-based CPHD filter for DMT}
In this section,
we denote the set of sensor nodes of the WSN as $\cal N$,
and for each node $i \in{\cal N}$,
${\cal N}^i$ will denote the set of its in-neighbor nodes (including itself).
For the sake of convenience, we also define $\overline {\cal N}^i \buildrel \Delta \over = {\cal N}^i\backslash \left\{ i \right\}$.
The multitarget state is modeled as IIDCP at each node.
Then, accordingly, the CPHD filter \cite{mahler2007phd} is employed to propagate, in each sensor node, the local posterior.
Further, the GM implementation of the CPHD filter \cite{vo2007analytic} is adopted 
in order to save both the computation and communication resources of the energy-limited WSN.

It is supposed that the WSN works in a fully distributed fashion, 
i.e. there is no fusion node and all nodes operate in a \textit{peer-to-peer} (P2P) way.
As a result,
it is difficult for each node $i \in {\cal N}$ to gather all densities from other nodes.
Consequently,
the fusion method of Proposition \ref{pro:WAILIIDCP} cannot be directly applied.
To this end, we exploit the consensus method \cite{battistelli2013consensusA,xiao2005scheme}
in order to diffuse local densities over the WSN.
Consensus consists of $L$ iterations of data-exchange with the neighbors and consequent fusion of the received densities with the local one to be performed at each sampling interval.
Specifically, consider a generic node $i$ at time $t$ and 
suppose that $\ell$ consensus iterations have already been carried out.
Then, denote the local CPMF and SPDF  at node $i \in {\cal N}$ 
as $\rho_{t,\ell }^i$ and $p_{t,\ell }^{I,i}$, respectively.
Then, at the next consensus step,
the fused CPMF $\rho_{t,\ell+1 }^i$  and SPDF $p_{t,\ell+1 }^{I,i}$ are computed as follows
\begin{align}
\rho _{t,\ell  + 1}^i\left( n \right) &= \sum\limits_{j \in {{\cal N}^i}} {{\omega ^{i,j}}\rho _{t,\ell }^j\left( n \right)} ,  \label{eq:CONcard} \\
p_{t,\ell  + 1}^{I,i}\left( x \right) &= \frac{1}{{\sum\limits_{j' \in {{\cal N}^i}} {{\omega ^{i,j'}}\hat N_t^{j'}} }}\sum\limits_{j \in {{\cal N}^i}} {{\omega ^{i,j}}\hat N_t^jp_{t,\ell }^{I,j}\left( x \right)} ,  \label{eq:CONSPDF}
\end{align}
where $\omega^ {i,j} > 0$ are suitable consensus weights such that $\sum_{j \in {\cal N}^i} \omega ^{i,j} =1$.
Specifically, Metropolis weights \cite{xiao2005scheme} are adopted in the simulation experiments.
The consensus-based CPHD filter running in each node of the WSN is summarized in Algorithm \ref{alg:DMT}.

\begin{algorithm}  
	\caption{Consensus-based CPHD filter for DMT (time $t$, node $i \in {\cal N}$)}
	\label{alg:DMT}
	Perform local MF (i.e. the CPHD filter \cite{mahler2007phd,vo2007analytic}) to propagate the prior density $f_{t-1}^{I,i}$ into the local posterior $f_{t|t}^{I,i}$\;
	Set $f_{t,0}^{I,i} = f_{t|t}^{I,i}$\;
	\For{$\ell = 1, \ldots ,L$}{
		Receive multitarget densities $f_{t,\ell-1}^{I,j}$ from the in-neighbors $j \in \overline{\cal N}^i$\;
		Perform MIL fusion via (\ref{eq:CONcard}) and (\ref{eq:CONSPDF})\;
		Perform pruning and merging (resampling) to reduce the number of GCs (particles)\;
	}
	Set $f_t^{I,i}=f_{t,L}^{I,i}$\;
	Extract the multitarget RFS $\widehat {\cal X}_t^i$ from $f_t^{I,i}$ using either the \textit{maximum a posteriori} (MAP) or \textit{expected a posteriori} (EAP) criterion \cite{mahler2007statistical}\;
\end{algorithm}  

\subsection{Simulation scenario}
Let us consider a simulation scenario wherein $8$ targets subsequently enter
and then move inside a $5000 \times 5000 \, [m^2]$ surveillance region.
The single target state at time $t$ is denoted as
${x_t} = [{{\xi _t}\;{{\dot \xi }_t}\;{\eta _t}\;{{\dot \eta }_t}}]^\top $,
where $[ {{\xi _t}\;{\zeta _t}} ]^\top$ and $[ {{{\dot \xi }_t}\;{{\dot \zeta }_t}} ]^\top$  are respectively position and velocity in 
Cartesian coordinates.
It is supposed that the target motion is described by the following linear \textit{white noise acceleration} model
\begin{align}
x_t = A \, x_{t-1} + w_t,
\end{align}
where $w_t$ represents additive white Gaussian noise with covariance matrix $Q = {\rm diag}(25[m^2],4[m^2/s^2],25[m^2],4[m^2/s^2])$, 
and
\begin{align}
A = \left[ {\begin{array}{*{20}{c}}
	1&T&0&0\\
	0&1&0&0\\
	0&0&1&T\\
	0&0&0&1
	\end{array}} \right],
\end{align}
$T=1 [s]$ being the sampling interval.

The considered WSN consists of $\left| {\cal N} \right| =10$ sensor nodes deployed at known locations
$ [ {\xi ^i}\;{\eta ^i} ]^\top$ for each $i \in \mathcal{N}$.
Specifically, 
each node is able to provide both \textit{time-of-arrival} (TOA) and \textit{direction-of-arrival} (DOA) measurements of targets,
i.e. the measurement $z_t^i$ generated by  a target with state $x_t$, at time $t$ and in node $i\in {\cal N}$, is modeled as
\begin{align} \label{eq:MF2}
{z_t^i} = h^i\left( {{x_t}} \right) + v_t^i,
\end{align}
where $v_t^i$ is a measurement noise modeled as a zero mean Gaussian process with covariance matrix $R^i = {\rm diag}(400[m^2],\;1[^{o^2}])$
and 
\begin{align}
h^i\left( {{x_t}} \right) = \left[ \begin{array}{l}
\sqrt {{{\left( {{\xi _t} - {\xi ^i}} \right)}^2} + {{\left( {{\eta _t} - {\eta ^i}} \right)}^2}} \\
{\rm atan2}\left( {{\eta _t} - {\eta ^i},{\xi _t} - {\xi ^i}} \right)
\end{array} \right],
\end{align}
${\rm atan2}$ denoting the four quadrant inverse tangent.
The considered scenario is illustrated in Figure \ref{fig:SCE}.

\begin{figure}[tb]
	\centering {
		\begin{tabular}{l}
			\includegraphics[width=0.48\textwidth]{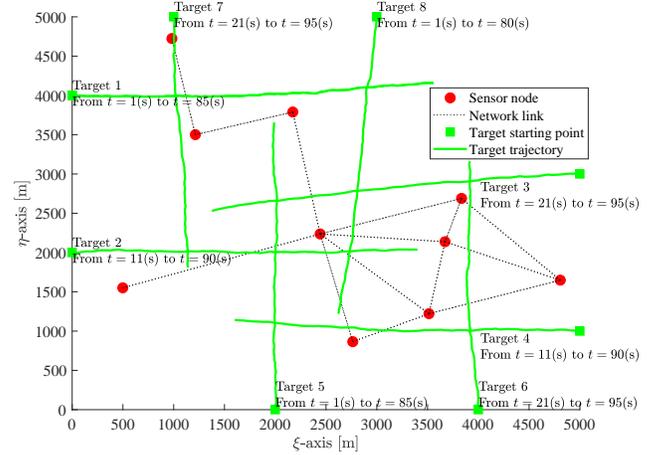}\\
		\end{tabular}
	}
	\caption{Simulated DMT scenario.}
	\vspace{-0.5\baselineskip}
	\label{fig:SCE}
\end{figure}

Concerning the parameters of the local GM-CPHD filters \cite{vo2007analytic},
the probability of target survival has been set to $P_s=0.95$
for all sensor nodes.
New-born targets are generated at each time by following the so-called adaptive birth model \cite{ristic2012adaptive},
where the weight of each GC is fixed at $0.15$.
The maximum number of targets and GCs have been set to $15$ and $30$, respectively.

\subsection{Simulation results}
Two performance indicators will be examined in this section: 
the \textit{optimal subpattern assignment} (OSPA) distance \cite{schuhmacher2008consistent} 
(with order $p=2$ and cutoff $c=100\left[m\right]$)
and the cardinality estimation error.
First, we consider the performance of MIL fusion based on two different probabilities of detection: 
1) $P_{d,t}^i = P_d = 0.98$ and 2) $P_{d,t}^i = P_d = 0.5$
for any time $t$ and sensor node $i\in{\cal N}$.
In order to better illustrate the performance of MIL fusion,
the performance of local CPHD filtering without fusion
and of CPHD filtering with MWIG fusion 
are also considered for comparison.

The averaged performance over $200$ Monte Carlo trials under different detection probabilities ($P_d=0.98$ and $P_d=0.5$)
and different numbers of consensus steps ($L=1$ and $L=5$) 
are illustrated in Figs. \ref{fig:BigPd} and \ref{fig:SmallPd},
wherein clutter has been generated, at each sensor node, with Poisson-distributed cardinality 
(expected number of targets $\lambda_c=15$ at each time) 
and uniform spatial distribution over the surveillance region.
Note that MWIG-optimal and MIL-optimal in both figures refer to the centralized case,
i.e. the MWIG/MIL fusion with all local posteriors at each time and in each node.
It can be seen that MIL and MWIG fusions provide similar results when the detection probability is high.
Conversely, under low detection probability, MIL fusion outperforms MWIG fusion
especially for target number estimation.
Further, it can also be noticed that,
in the case of low detection probability,
the performance of MWIG fusion deteriorates whenever the number of consensus steps is increased,
and  that in this case MWIG fusion performs even worse than no fusion, i.e. local CPHD filtering.
This is due to the multiplicative nature of the MWIG fusion rule by which
any missed target detection in a local CPHD filter of a sensor node will cause target disappearance in the fused IIDCP density.
Consequently, when the detection probability is low and there are more nodes involved in the fusion,
the probability of occurrence of a missed detection will raise,
thus negatively affecting DMT performance.

\begin{figure}[tb]
	\centering
	\begin{tabular}{c}
		(a) \\
		\includegraphics[width=1\columnwidth]{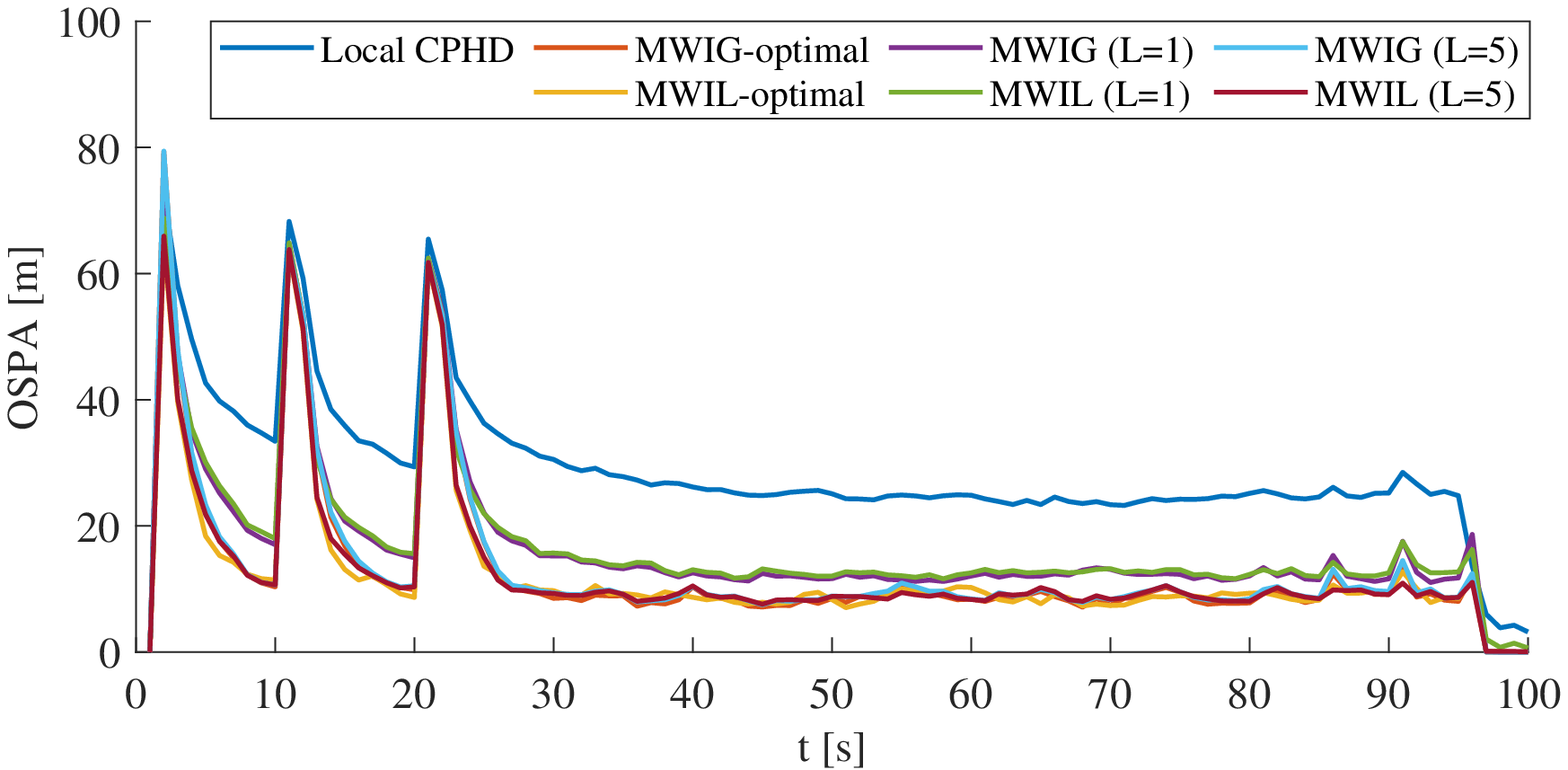} \\
		(b) \\
		\includegraphics[width=1\columnwidth]{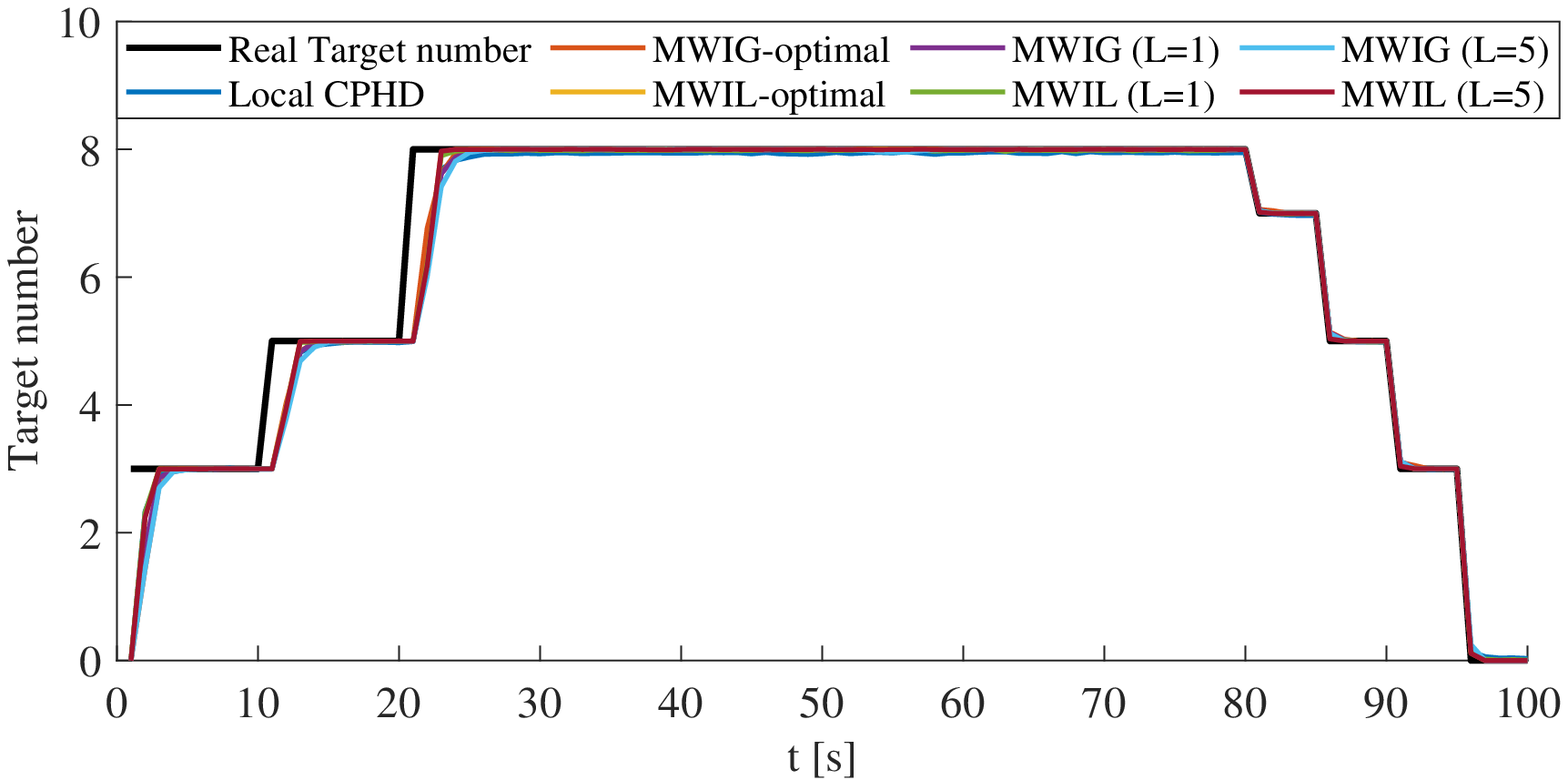}
	\end{tabular}
	\caption{Performance comparisons under high detection probability ($P_d=0.98$):  (a) average OSPA over all nodes vs time; (b) average estimated target number over all nodes vs time.}
	\label{fig:BigPd}
\end{figure}

\begin{figure}[tb]
	\centering
	\begin{tabular}{c}
		(a) \\
		\includegraphics[width=1\columnwidth]{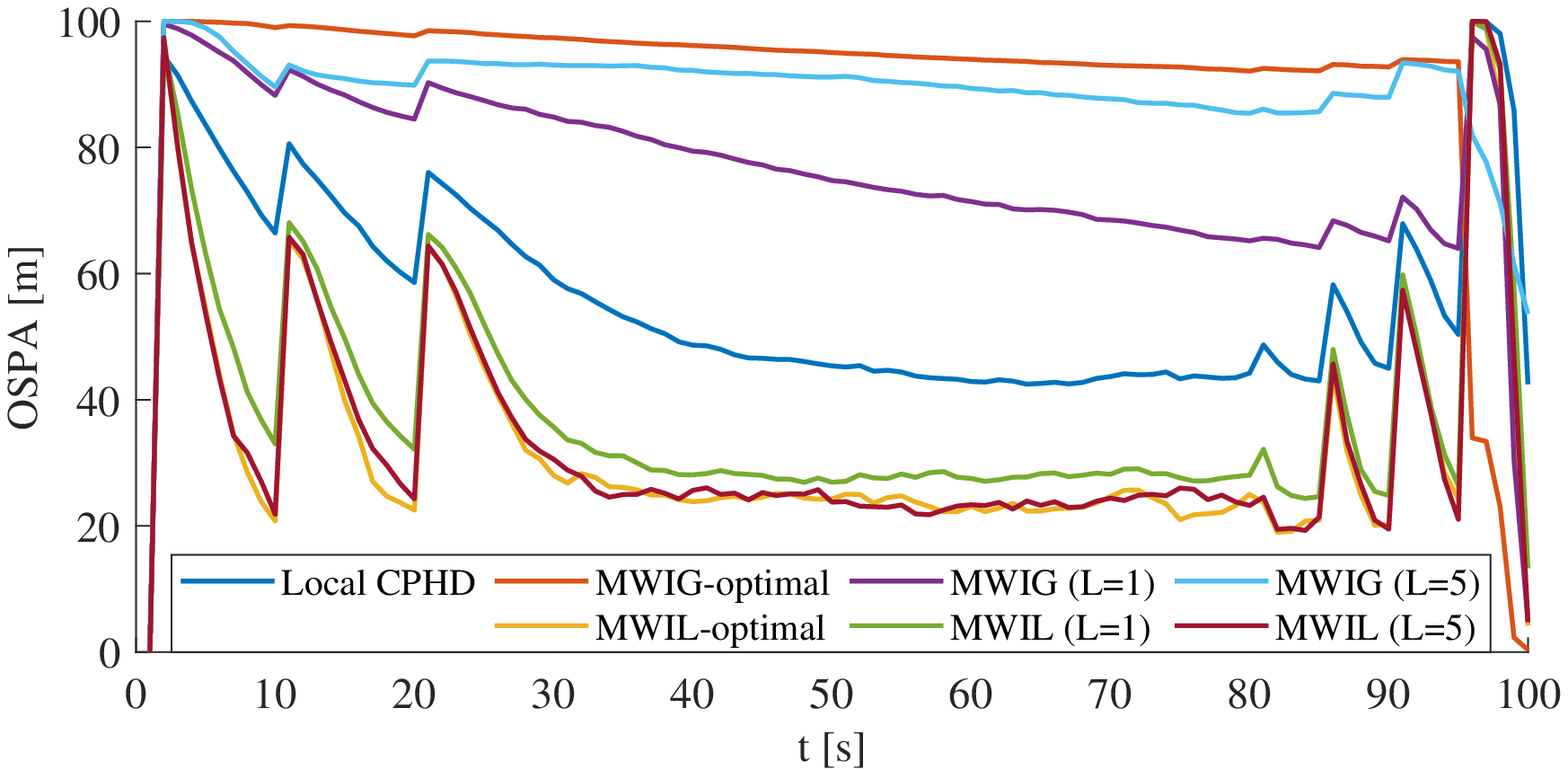} \\
		(b) \\
		\includegraphics[width=1\columnwidth]{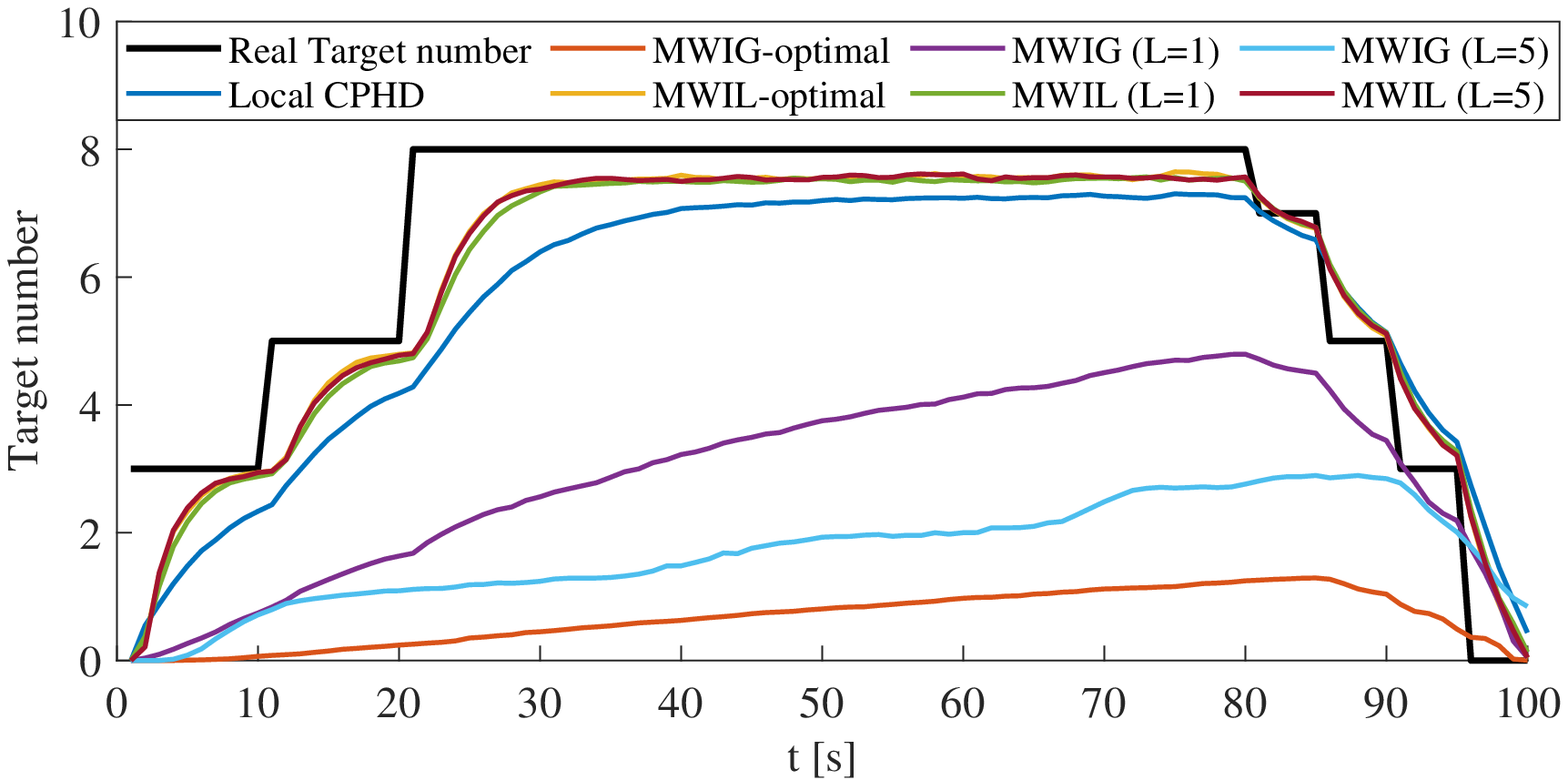}
	\end{tabular}
	\caption{Performance comparison under low detection probability ($P_d=0.5$):  (a) average OSPA over all nodes vs time;
		(b) average estimated target number over all nodes vs time.}
	\label{fig:SmallPd}
\end{figure}

Next, we examine the average OSPA of MIL fusion under different clutter rates.
In this case, we fix the detection probability to $P_d = 0.98$ and set the number of consensus steps to $L=1$.
The result is illustrated in Figure \ref{fig:OSPAvsClut}.
It can be seen that the performance of MIL fusion is almost the same of MWIG fusion under low clutter rate.
On the other hand, for higher clutter rates, MWIG fusion performs better than its MIL counterpart.

To summarize, MIL fusion is preferable
for higher rates of missed detections while MWIG fusion is more suitable for higher clutter rates.
\begin{figure}[tb]
	\centering {
		\begin{tabular}{l}
			\includegraphics[width=0.5\textwidth]{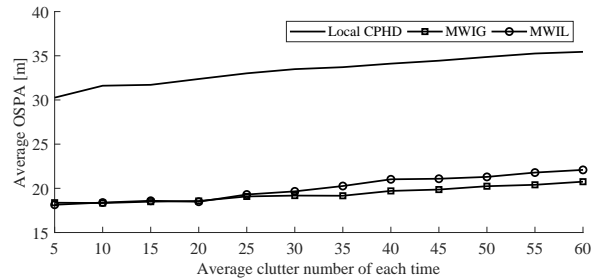}\\
		\end{tabular}
	}
	\caption{Averaged OSPA under different clutter rates.}
	\vspace{-0.5\baselineskip}
	\label{fig:OSPAvsClut}
\end{figure}

\section{Conclusions}
In this paper, fusion of multiobject information has been addressed.
In particular, it has been proposed to fuse local multiobject densities from multiple agents according to the
\textit{minimum weighted information loss} (MIL) criterion by which
the fused density turns out to be consistent with the so-called \textit{linear opinion pool} (LOP) or, equivalently,
equal to the weighted arithmetic average of the local densities.
Further,
the MIL rule has been exploited to compute the optimal multiobject density within the same family of the local ones
for \textit{i.i.d. cluster process} (IIDCP) and \textit{multi-object Poisson process} (MPP) families.
The performance of MIL fusion  has been assessed by simulation experiments, also highlighting its pros and cons with respect to the dual \textit{minimum weighted information gain} (MWIG) fusion which, on the other hand, is consistent with the so-called \textit{logarithmic opinion pool} (LogOP) or, equivalently,
equal to the weighted geometric average of the local densities..
Possible future work will concern use of MIL fusion for:
1) \textit{distributed multitarget tracking} (DMT) with labeled multiobject fiters;
2) multiobject fusion with agents having different \textit{fields-of-view} (FoVs).

\appendices
\section{}  \label{app:L1}
\begin{proof}[Proof of Lemma \ref{pro:WAIL}]
	By using the definition  (\ref{eq:KLD})  of KLD in (\ref{eq:WKLAIL}), and defining $f_{\rm MIL}\left( {\cal X} \right) = \sum\nolimits_{i \in {\cal N}} {{\omega ^i}f^i\left( {\cal X} \right)}$, we have
	\begin{align}
	{\overline f}\left( x \right) &= \arg \mathop {\min }\limits_f \sum\limits_{i \in {\cal N}} {{\omega ^i}\int {f^i\left( {\cal X} \right)\log \frac{{f^i\left( {\cal X} \right)}}{{f\left( {\cal X} \right)}}\delta {\cal X}} }   \nonumber  \\
	&= \arg \mathop {\min }\limits_f \left\{ {\sum\limits_{i \in {\cal N}} {{\omega ^i}\int {f^i\left( {\cal X} \right)\log f^i\left( {\cal X} \right)\delta {\cal X}} } } \right.  \nonumber  \\
	& \quad\quad\quad - \left. {\sum\limits_{i \in {\cal N}} {{\omega ^i}\int {f^i\left( {\cal X} \right)\log f\left( {\cal X} \right)\delta {\cal X}} } } \right\}  \nonumber  \\
	& = \arg \mathop {\min }\limits_f \left\{ {\int {\sum\limits_{i \in {\cal N}} {{\omega ^i}f^i\left( {\cal X} \right)} \log \frac{{f^i\left( {\cal X} \right)}}{{f\left( {\cal X} \right)}}\delta {\cal X}} } \right\}  \nonumber  \\
	& \quad\quad\quad + \int {\sum\limits_{i \in {\cal N}} {{\omega ^i}f^i\left( {\cal X} \right)} \log \sum\limits_{j \in {\cal N}} {{\omega ^j}f^j\left( {\cal X} \right)} \delta {\cal X}}   \nonumber  \\
	& \quad\quad\quad - \int {\sum\limits_{i \in {\cal N}} {\omega ^i}f^i \left( {\cal X} \right)
	 \log \sum\limits_{j \in {\cal N}} {\omega ^j}f^j \left( {\cal X} \right)  \delta {\cal X}}   \nonumber  \\
	& = \arg \mathop {\min }\limits_f \left\{ {{D_{\rm KL}}\left( {\left. f_{\rm MIL} \right\|f} \right)} \right\}  \nonumber \\
	& \quad\quad\quad + \sum\limits_{i \in {\cal N}} {{\omega ^i}{D_{\rm KL}}\left( {\left. {f^i} \right\|f_{\rm MIL}} \right)} .
	\end{align}
	Then, it is straightforward to conclude that $\overline f = f_{\rm MIL}$.
\end{proof}

\section{} \label{app:Ber}
\begin{proof}[Proof of Proposition \ref{pro:WAILBer}]
	First notice that $\overline f^B\left( {\cal X} \right) = 0$ for all sets ${\cal X}$ such that $\left| {\cal X} \right| \ge 2$ 
	due to the fact that for such sets $f^{B,i}\left( {\cal X} \right) = 0$ for all $i$. 
	Then, there only remains to consider the cases $\left| {\cal X} \right| =0$ and $\left| {\cal X} \right| =1$.
	Hence, we have
	\begin{equation}
	{\overline f^B}\left( \emptyset  \right) = \sum\limits_{i \in {\cal N}} {{\omega ^i}f^{B,i}\left( \emptyset  \right)}  = \sum\limits_{i \in {\cal N}} {{\omega ^i}\left( {1 - r^{i}} \right)}  =1 - \overline{r}  
	\end{equation}
	which yields (\ref{eq11}).
	Further,
	\begin{equation}
	{\overline f^B}\left( {\left\{ x \right\}} \right) = \sum\limits_{i \in {\cal N}} {{\omega ^i}f^{B,i}\left( {\left\{ x \right\}} \right)}  = \sum\limits_{i \in {\cal N}} {{\omega ^i}r^{i}\cdot p^{B,i}\left( x \right)}
	\end{equation}
	from which we get
	\begin{align}
	{\overline p^B}\left( x \right) = \frac{{{\overline f^B}\left( {\left\{ x \right\}} \right)}}{{1 - {\overline r}}} = \frac{{\sum\limits_{i \in {\cal N}} {{\omega ^i}r^{i} \cdot p^{B,i}\left( x \right)} }}{{\sum\limits_{i \in {\cal N}} {{\omega ^i}r^{i}} }}
	\end{align}
	which, in turn, yields (\ref{eq12}).
\end{proof}

\section{}  \label{app:MPP}
\begin{proof}[Proof of Proposition \ref{pro:WAILMPP}]
	Recalling that the mean value of a Poisson distribution equals its parameter and proceeding as in the proof of Proposition 2, we can write
	\begin{align*}
	&{\overline f^M}\left( {\cal X} \right) \nonumber  \\
	&= \arg \mathop {\min }\limits_{\left( {\lambda ,p} \right)} \sum\limits_{i \in {\cal N}} {{\omega _i}\left[ {{D_{\rm KL}}\left( {\left. {\sigma _{\lambda_i} } \right\| {\sigma _\lambda } } \right) 
			+ \lambda_i \,  {D_{\rm KL}}\left( {\left. {p_i} \right\|p} \right)} \right]} \nonumber \\
	\end{align*}
	where ${\sigma _\lambda }(\cdot)$ denotes the Poisson distribution with parameter $\lambda$.
	Then, the fused SPDF can be directly obtained along the same lines as in Proposition 2.
	Further, the KLD of the CPMF in (12) can be specified as
	\begin{align*}
	{D_{\rm KL}}\left( {\left. {{\sigma _{{\lambda _i}}}} \right\|{\sigma _\lambda }} \right) = {\lambda _i}\log \frac{{{\lambda ^i}}}{\lambda } + \lambda  - {\lambda _i}.
	\end{align*}
	Then the MIL-fused CPMF $\sigma _{\overline \lambda }$ can be found as
	\begin{align*}
	{\sigma _{\overline \lambda }} &= \arg \mathop {\min }\limits_{{\sigma _\lambda }} \sum\limits_{i \in {\cal N}} {{\omega _i}{D_{\rm KL}}\left( {\left. {{\sigma _{{\lambda _i}}}} \right\|{\sigma _\lambda }} \right)} \nonumber  \\
	& = \arg \mathop {\min }\limits_\lambda  \left\{ {\sum\limits_{i \in {\cal N}} {{\omega _i}{\lambda _i}\log \frac{{{\lambda _i}}}{\lambda } + {\omega _i}\left( {\lambda  - {\lambda _i}} \right)} } \right\}   \nonumber  \\
	& = \arg \mathop {\min }\limits_\lambda  \left\{ {\sum\limits_{i \in {\cal N}} {{\omega _i}{\lambda _i}} \log \frac{{\sum\nolimits_{i \in {\cal N}} {{\omega _i}{\lambda _i}} }}{\lambda } + \lambda  - \sum\limits_{i \in {\cal N}} {{\omega _i}{\lambda _i}} } \right\} \nonumber  \\
	& \quad\quad\quad\quad + \sum\limits_{i \in {\cal N}} {{\omega _i}{\lambda _i}} \log \frac{{{\lambda _i}}}{{{\omega _i}{\lambda _i}}}  \nonumber  \\
	&= \arg \mathop {\min }\limits_\lambda  {D_{\rm KL}}\left( {\left. {{\sigma _{\sum\limits_{i \in {\cal N}} {{\omega _i}{\lambda _i}} }}} \right\|{\sigma _\lambda }} \right) + \sum\limits_{i \in {\cal N}} {{\omega _i}{\lambda _i}\log \frac{{{\lambda _i}}}{{{\omega _i}{\lambda _i}}}} .
	\end{align*}
	Hence, it is immediate to see that $\overline \lambda = \sum\nolimits_{i \in {\cal N}} {{\omega _i}\lambda_i}$.
\end{proof}



\ifCLASSOPTIONcaptionsoff
  \newpage
\fi



%


\begin{thebibliography}{10}
	\providecommand{\url}[1]{#1}
	\csname url@samestyle\endcsname
	\providecommand{\newblock}{\relax}
	\providecommand{\bibinfo}[2]{#2}
	\providecommand{\BIBentrySTDinterwordspacing}{\spaceskip=0pt\relax}
	\providecommand{\BIBentryALTinterwordstretchfactor}{4}
	\providecommand{\BIBentryALTinterwordspacing}{\spaceskip=\fontdimen2\font plus
		\BIBentryALTinterwordstretchfactor\fontdimen3\font minus
		\fontdimen4\font\relax}
	\providecommand{\BIBforeignlanguage}[2]{{%
			\expandafter\ifx\csname l@#1\endcsname\relax
			\typeout{** WARNING: IEEEtran.bst: No hyphenation pattern has been}%
			\typeout{** loaded for the language `#1'. Using the pattern for}%
			\typeout{** the default language instead.}%
			\else
			\language=\csname l@#1\endcsname
			\fi
			#2}}
	\providecommand{\BIBdecl}{\relax}
	\BIBdecl
	
	\bibitem{cook2004smart}
	D.~Cook and S.~K. Das, \emph{Smart environments: technology, protocols and
		applications}.\hskip 1em plus 0.5em minus 0.4em\relax Wiley-Interscience,
	2004.
	
	\bibitem{mukhopadhyay2010advances}
	S.~C. Mukhopadhyay and H.~Leung, \emph{Advances in wireless sensors and sensor
		networks}.\hskip 1em plus 0.5em minus 0.4em\relax Springer, 2010.
	
	\bibitem{ren2008distributed}
	W.~Ren and R.~W. Beard, \emph{Distributed consensus in multi-vehicle
		cooperative control}.\hskip 1em plus 0.5em minus 0.4em\relax Springer, 2008.
	
	\bibitem{bar1995multitarget}
	Y.~Bar-Shalom and X.-R. Li, \emph{Multitarget-multisensor tracking: principles
		and techniques}.\hskip 1em plus 0.5em minus 0.4em\relax YBS Publishing,
	1995.
	
	\bibitem{bar1975tracking}
	Y.~Bar-Shalom and E.~Tse, ``Tracking in a cluttered environment with
	probabilistic data association,'' \emph{Automatica}, vol.~11, no.~5, pp.
	451--460, 1975.
	
	\bibitem{reid1979algorithm}
	D.~B.~Reid, ``An algorithm for tracking multiple targets,''
	\emph{IEEE Transactions on Automatic Control}, vol.~24, no.~6, pp. 843--854,
	1979.
	
	\bibitem{fortmann1983sonar}
	T.~E. Fortmann, Y.~Barshalom, and M.~Scheffe, ``Sonar tracking of multiple
	targets using joint probabilistic data association,'' \emph{IEEE Journal of
		Oceanic Engineering}, vol.~8, no.~3, pp. 173--184, 1983.
	
	\bibitem{musicki2004joint}
	D.~Musicki and R.~Evans, ``Joint integrated probabilistic data association:
	JIPDA,'' \emph{IEEE Transactions on Aerospace and Electronic Systems},
	vol.~40, no.~3, pp. 1093--1099, 2004.
	
	\bibitem{mahler2003multitarget}
	R.~P.~S. Mahler, ``Multitarget Bayes filtering via first-order multitarget
	moments,'' \emph{IEEE Transactions on Aerospace and Electronic Systems},
	vol.~39, no.~4, pp. 1152--1178, 2003.
	
	\bibitem{mahler2007phd}
	------, ``PHD filters of higher order in target number,'' \emph{IEEE
		Transactions on Aerospace and Electronic Systems}, vol.~43, no.~4, pp.
	1523--1543, 2007.
	
	\bibitem{ristic2013tutorial}
	B.~Ristic, B.-T. Vo, B.-N. Vo, and A.~Farina, ``A tutorial on Bernoulli
	filters: theory, implementation and applications,'' \emph{IEEE Transactions
		on Signal Processing}, vol.~61, no.~13, pp. 3406--3430, 2013.
	
	\bibitem{vo2009cardinality}
	B.~T. Vo, B.-N. Vo, and A.~Cantoni, ``The cardinality balanced multi-target
	multi-Bernoulli filter and its implementations,'' \emph{IEEE Transactions on
		Signal Processing}, vol.~57, no.~2, pp. 409--423, 2009.
	
	\bibitem{vo2014labeled}
	B.-N. Vo, B.-T. Vo, and D.~Phung, ``Labeled random finite sets and the Bayes
	multi-target tracking filter,'' \emph{IEEE Transactions on Signal
		Processing}, vol.~62, no.~24, pp. 6554--6567, 2014.
	
	\bibitem{reuter2014labeled}
	S.~Reuter, B.-T. Vo, B.-N. Vo, and K.~Dietmayer, ``The labeled multi-Bernoulli
	filter.'' \emph{IEEE Transactions on Signal Processing}, vol.~62, no.~12, pp.
	3246--3260, 2014.
	
	\bibitem{bar1981track}
	Y.~Bar-Shalom, ``On the track-to-track correlation problem,'' \emph{IEEE
		Transactions on Automatic Control}, vol.~26, no.~2, pp. 571--572, 1981.
	
	\bibitem{chang1997optimal}
	K.-C. Chang, R.~K. Saha, and Y.~Bar-Shalom, ``On optimal track-to-track
	fusion,'' \emph{IEEE Transactions on Aerospace and Electronic Systems},
	vol.~33, no.~4, pp. 1271--1276, 1997.
	
	\bibitem{li2003optimal}
	X.~R. Li, Y.~Zhu, J.~Wang, and C.~Han, ``Optimal linear estimation fusion-part
	I: unified fusion rules,'' \emph{IEEE Transactions on Information Theory},
	vol.~49, no.~9, pp. 2192--2208, 2003.
	
	\bibitem{bar2008sequential}
	Y.~Bar-Shalom, ``On the sequential track correlation algorithm in a multisensor
	data fusion system,'' \emph{IEEE Transactions on Aerospace and Electronic
		Systems}, vol.~44, no.~1, 2008.
	
	\bibitem{tian2010no}
	X.~Tian, Y.~Bar-Shalom, and G.~Chen, ``A no-loss covariance intersection
	algorithm for track-to-track fusion,'' in \emph{Signal and Data Processing of
		Small Targets}, 2010, vol. 76980S, pp. 1--10.
	
	\bibitem{ajgl2018covariance}
	J.~Ajgl and O.~Straka, ``Covariance intersection in track-to-track fusion:
	comparison of fusion configurations,'' \emph{IEEE Transactions on Industrial
		Informatics}, vol.~14, no.~3, pp. 1127--1136, 2018.
	
	\bibitem{mahler2007statistical}
	R.~P.~S. Mahler, \emph{Statistical multisource-multitarget information
		fusion}.\hskip 1em plus 0.5em minus 0.4em\relax Artech House, 2007.
	
	\bibitem{mahler2000optimal}
	R.~P. Mahler, ``Optimal/robust distributed data fusion: a unified approach,''
	in \emph{Signal Processing, Sensor Fusion, and Target Recognition IX}, 2000, vol.
	4052, pp. 128--139.
	
	\bibitem{clark2010robust}
	D.~Clark, S.~Julier, R.~Mahler, and B.~Ristic, ``Robust multi-object sensor
	fusion with unknown correlations,'' in \emph{Proceedings of IEEE Sensor Signal Processing for Defence}, 2010, pp. 1--5.
	
	\bibitem{Guldogan2014Consensus}
	M.~B. Guldogan, ``Consensus Bernoulli filter for distributed detection and
	tracking using multi-static Doppler shifts,'' \emph{IEEE Signal Processing
		Letters}, vol.~21, no.~6, pp. 672--676, 2014.
	
	\bibitem{uney2013distributed}
	M.~Uney, D.~E. Clark, and S.~J. Julier, ``Distributed fusion of PHD filters via
	exponential mixture densities.'' \emph{IEEE Journal of Selected Topics in Signal Processing},
	vol.~7, no.~3, pp. 521--531, 2013.
	
	\bibitem{battistelli2013consensusA}
	G.~Battistelli, L.~Chisci, C.~Fantacci, A.~Farina, and A.~Graziano, ``Consensus
	PHD filter for distributed multitarget tracking,'' \emph{IEEE Journal of
		Selected Topics in Signal Processing}, vol.~7, no.~3, pp. 508--520, 2013.
	
	\bibitem{wang2017distributed}
	B.~Wang, W.~Yi, R.~Hoseinnezhad, S.~Li, L.~Kong, and X.~Yang, ``Distributed
	fusion with multi-Bernoulli filter based on generalized covariance
	intersection,'' \emph{IEEE Transactions on Signal Processing}, vol.~65,
	no.~1, pp. 242--255, 2017.
	
	\bibitem{fantacci2018robust}
	C.~Fantacci, B.~Vo, B.~Vo, G.~Battistelli, and L.~Chisci, ``Robust fusion for
	multisensor multiobject tracking,'' \emph{IEEE Signal Processing Letters},
	vol.~25, no.~5, pp. 640--644, 2018.
	
	\bibitem{li2018robust}
	S.~Li, W.~Yi, R.~Hoseinnezhad, G.~Battistelli, B.~Wang, and L.~Kong, ``Robust
	distributed fusion with labeled random finite sets,'' \emph{IEEE Transactions
		on Signal Processing}, vol.~66, no.~2, pp. 278--293, 2018.
	
	\bibitem{li2018local}
	T.~Li, V.~Elvira, H.~Fan, and J.~M. Corchado, ``Local diffusion based
	distributed SMC-PHD filtering using sensors with limited sensing range,''
	\emph{IEEE Sensors Journal}, vol.~19, no.~4, pp. 1580--1589, 2019.
	
	\bibitem{fantacci2015consensus}
	C.~Fantacci, B.~N. Vo, B.~T. Vo, G.~Battistelli, and L.~Chisci, ``Consensus
	labeled random finite set filtering for distributed multi-object tracking,''
	\emph{arXiv:1501.01579}, 2015.
	
	\bibitem{genest1986combining}
	C.~Genest, J.~V. Zidek, ``Combining probability distributions: a
	critique and an annotated bibliography,'' \emph{Statistical Science}, vol.~1,
	no.~1, pp. 114--135, 1986.
	
	\bibitem{battistelli2014kullback}
	G.~Battistelli and L.~Chisci, ``Kullback--Leibler average, consensus on
	probability densities, and distributed state estimation with guaranteed
	stability,'' \emph{Automatica}, vol.~50, no.~3, pp. 707--718, 2014.
	
	\bibitem{battistelli2015distributed}
	G.~Battistelli, L.~Chisci, C.~Fantacci, A.~Farina, and R.~P. Mahler,
	``Distributed fusion of multitarget densities and consensus PHD/CPHD
	filters,'' in \emph{Signal Processing, Sensor/Information Fusion, and Target
		Recognition XXIV}, 2015, vol. 94740E, pp. 1--15.
	
	\bibitem{uney2019fusion}
	M.~{\"U}ney, J.~Houssineau, E.~Delande, S.~J. Julier, and D.~Clark, ``Fusion of
	finite set distributions: pointwise consistency and global cardinality,''
	\emph{IEEE Transactions on Aerospace and Electronic Systems}, 2019.
	
	\bibitem{bailey2012conservative}
	T.~Bailey, S.~Julier, and G.~Agamennoni, ``On conservative fusion of
	information with unknown non-Gaussian dependence,'' in \emph{15th International Conference on Information
		Fusion}.\hskip 1em plus 0.5em
	minus 0.4em\relax 2012, pp. 1876--1883.
	
	\bibitem{li2018partial}
	T.~Li, J.~M. Corchado, and S.~Sun, ``Partial consensus and conservative fusion
	of Gaussian mixtures for distributed PHD fusion,'' \emph{IEEE Transactions on
		Aerospace and Electronic Systems}, 2018.
	
	\bibitem{li2019cardinality}
	T.~Li, F.~Hlawatsch, and P.~M. Djuri{\'c}, ``Cardinality-consensus-based PHD
	filtering for distributed multitarget tracking,'' \emph{IEEE Signal
		Processing Letters}, vol.~26, no.~1, pp. 49--53, 2019.
	
	\bibitem{yu2016distributed}
	J.~Y. Yu, M.~Coates, and M.~Rabbat, ``Distributed multi-sensor CPHD filter
	using pairwise gossiping,'' in \emph{ IEEE International Conference on Acoustics, Speech and Signal Processing}.\hskip 1em plus 0.5em minus
	0.4em\relax 2016, pp. 3176--3180.
	
	\bibitem{gostar2017cauchy}
	A.~K. Gostar, R.~Hoseinnezhad, and A.~Bab-Hadiashar, ``Cauchy-Schwarz
	divergence-based distributed fusion with Poisson random finite sets,'' in
	\emph{International Conference on Control, Automation and Information Sciences}.\hskip 1em plus 0.5em minus 0.4em\relax
	2017, pp. 112--116.
	
	\bibitem{hoang2015cauchy}
	H.~G. Hoang, B.-N. Vo, B.-T. Vo, and R.~Mahler, ``The Cauchy--Schwarz
	divergence for Poisson point processes,'' \emph{IEEE Transactions on
		Information Theory}, vol.~61, no.~8, pp. 4475--4485, 2015.
	
	\bibitem{abbas2009kullback}
	A.~E. Abbas, ``A Kullback-Leibler view of linear and log-linear pools,''
	\emph{Decision Analysis}, vol.~6, no.~1, pp. 25--37, 2009.
	
	\bibitem{li2017distributed}
	T.~Li, ``Distributed SMC-PHD fusion for partial, arithmetic average
	consensus,'' \emph{arXiv:1712.06128}, 2017.
	
	\bibitem{kullback1997information}
	S.~Kullback, \emph{Information theory and statistics}.\hskip 1em plus 0.5em
	minus 0.4em\relax Dover, 1968.
	
	\bibitem{vo2006gaussian}
	B.-N. Vo and W.-K. Ma, ``The Gaussian mixture probability hypothesis density
	filter,'' \emph{IEEE Transactions on Signal Processing}, vol.~54, no.~11, pp.
	4091--4101, 2006.
	
	\bibitem{vo2007analytic}
	B.-T. Vo, B.-N. Vo, and A.~Cantoni, ``Analytic implementations of the
	cardinalized probability hypothesis density filter,'' \emph{IEEE Transactions
		on Signal Processing}, vol.~55, no.~7, pp. 3553--3567, 2007.
	
	\bibitem{vo2005sequential}
	B.-N. Vo, S.~Singh, and A.~Doucet, ``Sequential Monte Carlo methods for
	multitarget filtering with random finite sets,'' \emph{IEEE Transactions on
		Aerospace and Electronic Systems}, vol.~41, no.~4, pp. 1224--1245, 2005.
	
	\bibitem{gunay2016chernoff}
	M.~Gunay, U.~Orguner, and M.~Demirekler, ``Chernoff fusion of Gaussian mixtures
	based on Sigma-point approximation,'' \emph{IEEE Transactions on Aerospace
		and Electronic Systems}, vol.~52, no.~6, pp. 2732--2746, 2016.
	
	\bibitem{xiao2005scheme}
	L.~Xiao, S.~Boyd, and S.~Lall, ``A scheme for robust distributed sensor fusion
	based on average consensus,'' in \emph{Fourth International Symposium on Information Processing in Sensor
		Networks}.\hskip 1em plus
	0.5em minus 0.4em\relax 2005, pp. 63--70.
	
	\bibitem{ristic2012adaptive}
	B.~Ristic, D.~Clark, B.-N. Vo, and B.-T. Vo, ``Adaptive target birth intensity
	for PHD and CPHD filters,'' \emph{IEEE Transactions on Aerospace and
		Electronic Systems}, vol.~48, no.~2, pp. 1656--1668, 2012.
	
	\bibitem{schuhmacher2008consistent}
	D.~Schuhmacher, B.-T. Vo, and B.-N. Vo, ``A consistent metric for performance
	evaluation of multi-object filters,'' \emph{IEEE Transactions on Signal
		Processing}, vol.~56, no.~8, pp. 3447--3457, 2008.
	
\end{thebibliography}


\end{document}